\documentclass[11pt]{article}

\usepackage{etex}

\usepackage{fullpage}

\usepackage{amsmath}
\usepackage{amssymb}
\usepackage{amsthm}
\usepackage{array}
\usepackage{bbm}
\usepackage{cancel}
\usepackage{cmap}
\usepackage{enumerate}
\usepackage{enumitem}
\usepackage{fancyhdr}
\usepackage{mathdots}
\usepackage{mathtools}
\usepackage{mathrsfs}
\usepackage{hyperref}
\usepackage{stackrel}
\usepackage{stmaryrd}
\usepackage{tabularx}
\usepackage{tikz}
\usepackage{ctable}
\usepackage{titlesec}
\usepackage{titletoc}
\usepackage{url}
\usepackage{verbatim}
\usepackage{wasysym}
\usepackage{wrapfig}
\usepackage{yhmath}
\usepackage[all,cmtip]{xy}

\usepackage{hyperref}
\usepackage[%
	firstinits=true,
	doi=false,
	url=false,
	isbn=false,
	backend=bibtex,
	style=alphabetic,hyperref]{biblatex}

\usetikzlibrary{calc,trees,positioning,arrows,chains,shapes.geometric,%
    decorations.pathreplacing,decorations.pathmorphing,shapes,%
    matrix,shapes.symbols,shadows,fadings}



\newtheorem{thm}{Theorem}[section]
\newtheorem*{thm*}{Theorem}

\newtheorem*{prb*}{Problem}

\newtheorem*{ax*}{Axiom}

\newtheorem*{clm*}{Claim}

\newtheorem*{conj*}{Conjecture}

\newtheorem{df}[thm]{Definition}
\newtheorem*{df*}{Definition}

\newtheorem*{ex*}{Example}


\newtheorem{lem}[thm]{Lemma}
\newtheorem*{lem*}{Lemma}

\newtheorem*{pos*}{Postulate}
\newtheorem{pr}[thm]{Proposition}
\newtheorem*{pr*}{Proposition}

\newtheorem*{qu*}{Question}

\newtheorem*{rem*}{Remark}




\newcommand{\E}[0]{\mathbb{E}}
\newcommand{\EE}[0]{\mathop{\mathbb E}}
\newcommand{\F}[0]{\mathbb{F}}




\newcommand{\Pj}[0]{\mathbb{P}}


\newcommand{\R}[0]{\mathbb{R}}

\newcommand{\Z}[0]{\mathbb{Z}}




\newcommand{\mv}[0]{\mathbf{v}}
\newcommand{\mx}[0]{\mathbf{x}}
\newcommand{\my}[0]{\mathbf{y}}

\newcommand{\al}[0]{\alpha}
\newcommand{\be}[0]{\beta}

\newcommand{\de}[0]{\delta}
\newcommand{\De}[0]{\Delta}
\newcommand{\ep}[0]{\varepsilon}

\newcommand{\om}[0]{\omega}
\newcommand{\Om}[0]{\Omega}
\newcommand{\si}[0]{\sigma}


\newcommand{\nin}[0]{\not\in}
\newcommand{\opl}[0]{\oplus}

\newcommand{\subeq}[0]{\subseteq}

\newcommand{\nequiv}[0]{\not\equiv}
\newcommand{\bs}[0]{\backslash}
\newcommand{\iy}[0]{\infty}



\newcommand{\rc}[1]{\frac{1}{#1}}
\newcommand{\prc}[1]{\pa{\rc{#1}}}
\newcommand{\ff}[2]{\left\lfloor\frac{#1}{#2}\right\rfloor}

\newcommand{\fc}[2]{\frac{#1}{#2}}

\newcommand{\pf}[2]{\pa{\frac{#1}{#2}}}


\newcommand{\af}[2]{\ab{\fc{#1}{#2}}}


\newcommand{\ab}[1]{\left| {#1} \right|}
\newcommand{\an}[1]{\left\langle {#1}\right\rangle}

\newcommand{\ce}[1]{\left\lceil {#1}\right\rceil}

\newcommand{\pa}[1]{\left( {#1} \right)}

\newcommand{\ve}[1]{\left\Vert {#1}\right\Vert}

\newcommand{\set}[2]{\left\{{#1}:{#2}\right\}}


\newcommand{\wt}[1]{\widetilde{#1}}




\newcommand{\bwoc}[0]{by way of contradiction}

\newcommand{\Wog}[0]{ Without loss of generality }




\newcommand{\diag}{\operatorname{diag}}


\newcommand{\poly}{\operatorname{poly}}

\newcommand{\rank}{\operatorname{rank}}



\newcommand{\Supp}{\operatorname{Supp}}





\newcommand{\pull}[9]{
#1\ar@/_/[ddr]_{#2} \ar@{.>}[rd]^{#3} \ar@/^/[rrd]^{#4} & &\\
& #5\ar[r]^{#6}\ar[d]^{#8} &#7\ar[d]^{#9} \\}

\newcommand{\cmp}[9]{
\xymatrix{
#1 \ar[r]^{#4}{#5} \ar@/_2pc/[rr]^{#8}_{#9} & #2 \ar[r]^{#6}_{#7} & #3
}
}

\newcommand{\ha}[1]{\ar@{^(->}[#1]}
\newcommand{\ls}[1]{\ar@{-}[#1]}
\newcommand{\sj}[1]{\ar@{->>}[#1]}
\newcommand{\aq}[1]{\ar@{=}[#1]}
\newcommand{\acir}[1]{\ar@{}[#1]|-{\textstyle{\circlearrowright}}}
\newcommand{\acil}[1]{\ar@{}[#1]|-{\textstyle{\circlearrowleft}}}
\newcommand{\ard}[1]{\ar@{.>}[#1]}
\newcommand{\mt}[1]{\ar@{|->}[#1]}
\newcommand{\inm}[1]{\ar@{}[#1]|-{\in}}
\newcommand{\inr}{\ar@{}[d]|-{\rotatebox[origin=c]{-90}{$\in$}}}
\newcommand{\inl}{\ar@{}[u]|-{\rotatebox[origin=c]{90}{$\in$}}}





\newcommand{\smatt}[4]{
\left(\begin{smallmatrix} 
{#1}&{#2}\\
{#3}&{#4}
\end{smallmatrix}\right)
}

\newcommand{\beq}[1]{\begin{equation}\llabel{#1}}
\newcommand{\eeq}[0]{\end{equation}}
\newcommand{\bal}[0]{\begin{align*}}
\newcommand{\eal}[0]{\end{align*}}
\newcommand{\ban}[0]{\begin{align}}
\newcommand{\ean}[0]{\end{align}}





\newcommand{\md}[1]{\,(\text{mod }#1)}



\newcommand{\NP}[0]{\operatorname{\mathsf{NP}}}




\newcommand{\Maj}[0]{\operatorname{\mathsf{Maj}}}




\newcommand{\fixme}[1]{{\color{red}#1}}
\newcommand{\llabel}[1]{\label{#1}\text{\fixme{\tiny#1}}}


\newcommand{\arxiv}[1]{\url{http://www.arxiv.org/abs/#1}}

\newcommand{\vocab}[1]{\textbf{#1}} 

\allowdisplaybreaks[2]

\DeclareFontFamily{U}{wncy}{}
    \DeclareFontShape{U}{wncy}{m}{n}{<->wncyr10}{}
    \DeclareSymbolFont{mcy}{U}{wncy}{m}{n}
    \DeclareMathSymbol{\Sh}{\mathord}{mcy}{"58} 
\addbibresource{poly.bib}

\begin{document}

\title{Quadratic polynomials of small modulus cannot represent OR}

\author{Holden Lee\thanks{Department of Mathematics,
Princeton University.
Email: \texttt{holdenl@math.princeton.edu}. }}

\date{\today}
\maketitle

\begin{abstract}


An open problem in complexity theory is to find the minimal degree of a polynomial representing the $n$-bit OR function modulo composite $m$. This problem is related to understanding the power of circuits with $\text{MOD}_m$ gates where $m$ is composite. The OR function is of particular interest because 
it is the simplest function not amenable to bounds from communication complexity. 
Tardos and Barrington~\cite{TB} established a lower bound of $\Om((\log n)^{O_m(1)})$, and Barrington, Beigel, and Rudich~\cite{BBR} established an upper bound of $n^{O_m(1)}$. No progress has been made on closing this gap for twenty years, and progress will likely require new techniques~\cite{BL}. 

We make progress on this question viewed from a different perspective: rather than fixing the modulus $m$ and bounding the minimum degree $d$ in terms of the number of variables $n$, we fix the degree $d$ and bound $n$ in terms of the modulus $m$. 
For degree $d=2$, we prove a quasipolynomial bound of $n\le m^{O(d)}\le m^{O(\log m)}$, improving the previous best bound of $2^{O(m)}$ implied by Tardos and Barrington's general bound.

To understand the computational power of quadratic polynomials modulo $m$, we introduce a certain dichotomy which may be of independent interest. Namely, we define a notion of boolean rank of a quadratic polynomial $f$ and relate it to the notion of diagonal rigidity. Using additive combinatorics, we show that when the rank is low, $f(\mx)=0$ must have many solutions. Using techniques from exponential sums, we show that when the rank of $f$ is high, $f$ is close to equidistributed. In either case, $f$ cannot represent the OR function in many variables.
\end{abstract}


\section{Introduction}

\subsection{Overview}

A major open problem in complexity theory is to characterize the computational power of modular counting. For instance, for any composite $m$, the question $\NP\stackrel{?}{\subeq}\mathsf{AC}^0[m]$ is still open, where $\mathsf{AC}^0[m]$ is the class of functions computable by constant-depth circuits allowing $\text{MOD}_m$ gates.

One technique to tackle such problems is to relate circuits containing $\text{MOD}_m$ gates to polynomials over $\Z_m$. This has been successful when $m$ is prime. For example, to show $\text{MOD}_q\nin \mathsf{ACC}^0[p]$ for $p$ prime and any $q$ not a power of $p$, Razborov and Smolensky \cite{R,S} showed that functions in $\mathsf{AC}^0[p]$ can be approximated by polynomials of degree $(\log n)^{O(1)}$, and then proved that $\text{MOD}_m$ cannot be approximated by such polynomials. See \cite{B93} for a survey of the polynomial method in circuit complexity. (See also \cite{V}.)
What if we allow arbitrary moduli? Building on work of Yao~\cite{Y}, Beigel and Tarui \cite{BT} show that functions $f_n$ in $\mathsf{ACC}^0$ can be written in the form $h_n\circ p_n$ where $p_n$ is a polynomial over $\Z$ of degree $(\log n)^{O(1)}$ and $h_n:\Z\to \{0,1\}$ is some function. Thus, to show an explicit family of functions $f_n$ is not in $\mathsf{ACC}^0$, it suffices to lower-bound the minimum degree of polynomials representing $f_n$ in this way. However, currently there are few techniques for doing so.

As a first step towards such lower bounds, Barrington, Beigel, and Rudich~\cite{BBR} consider a similar question over $\Z_m$ rather than $\Z$.  Write $B=\{0,1\}$ below.
\begin{df}\label{df:1}
Let $g:B^n\to B$ be a function.
A function $f:B^n\to \Z_m$ \vocab{weakly represents} $g$ if there exists a partition $\Z_m = A\cup A^c$ such that 
\bal
g(\mx) =0 & \iff f(\mx) \in A\\ 
g(\mx) =1 & \iff f(\mx) \in A^c.
\end{align*}
Define the \vocab{weak degree} $\De(g,m)$ to be the minimal degree of a polynomial $f:B^n\to \Z_m$ that weakly represents $g$.
\end{df}
The goal is to estimate $\De(g,m)$ for specific functions $g$, and in particular exhibit functions $g$ with large weak degree.

One way to bound $\De(g,m)$ is using communication complexity.
Gromulsz \cite{Gr95} noted that if a function has $k$-party communiction complexity $\Om(k)$, then its weak degree is at least $k$.
From Babai, Nisan, and Szegedy's \cite{BNS} lower bound for the communication complexity of the generalized inner product function
he concluded that the GIP function has weak degree $\Om(\log n)$. 
Current techniques in communication complexity only give superconstant bounds when the number of parties is $O(\log n)$~\cite{KN}, so improvement along these lines is difficult.

Researchers have proved bounds for the more rigid notion of 1-sided representation, which requires $A=\{0\}$ in Definition~\ref{df:1}, obtaining bounds of $\Om(N)$ for the equality function $\text{Eq}_N(\mx, \my)$ \cite{KW91} and the majority function $\Maj_N(\mx)$ \cite{Ts93}, and a bound of $N^{\Om(1)}$ for the $\text{MOD}_n,\neg \text{MOD}_n$ when $n$ has a prime not dividing $m$ \cite{BBR}. However, 1-sided representation does not capture the full power of modular counting.

A natural function to consider is the OR function $\text{OR}_n:B^n\to B$, defined by $\text{OR}_n(\mathbf 0)=0$ and $\text{OR}_n(\mx)=1$ for $\mx\ne \mathbf 0$. $\text{OR}_n$ (equivalently $\text{AND}_n$) is a natural function to consider because it is the simplest function, in a sense, and its communication complexity is trivial, so other techniques are necessary to lower bound its degree. Note that because $\text{OR}_n$ takes the value 0 only on $\mathbf 0$, $\De(\text{OR}_n,m)$ is the minimal degree of a polynomial $g$ such that for $\mx\in B^n$, $g(\mx)=0$ iff $\mx=0$ (i.e., weak representation is equivalent to 1-sided representation).


When $m$ is a prime power it is folklore \cite{TB} that
\[
\fc{n}{m-1}\le \De(\text{OR}_n,m)\le n,
\]
because one can turn a polynomial $f$ weakly representing $g$, into a polynomial representing $g$, with at most a $m-1$ factor increase in degree. See also \cite{CFS} for general theorems on the zero sets of polynomials over finite fields.

Most interesting is the regime where $m$ is a fixed composite number (say, 6), and $n\to \iy$. Suppose $m$ has $r$ factors. Barrington, Beigel, and Rudich \cite{BBR} show the upper bound
\[
\De(\text{OR}_n,m) = O(n^{\rc r}).
\]
This bound is attained by a symmetric polynomial. Moreover, they prove that any symmetric polynomial representing $\text{OR}_n$ modulo $m$ has degree $\Om(n^{\rc r})$.

Alon and Beigel~\cite{AB} proved the first superconstant lower bound on the weak degree of $\text{OR}_n$. Later Tardos and Barrington~\cite{TB} proved the bound
\begin{equation}\label{eq:TB}
\De(\text{OR}_n,m)\ge \pa{\pa{\rc{q-1}-o(1)}\log n}^{\rc{r-1}} = \Om_m(\log n)^{\rc{r-1}}
\eeq
where $q$ is the smallest prime power fully dividing $m$. Their proof proceeded by finding a subcube of $B^n$ where the polynomial $f$ is constant modulo a prime power $q$ dividing $m$; then $f$ represents OR modulo $\fc mq$ on this subcube. An induction on the number of distinct prime factors results in the $\rc{r-1}$ exponent. This technique has also been used to show structural theorems of polynomials over $\F_q^n$, with applications to affine and variety extractors~\cite{CT}. 

In this work, we make modest progress on this question. Rather than fixing the modulus $m$ and bounding the minimum degree $d$, we fix the degree $d$ and bound the minimum modulus $m$. Specifically, we focus on the degree 2 case, and prove the following.
\begin{thm}\label{thm:main}
There exists a constant $C$ such that the following holds. 
If $m$ has $d$ prime factors, counted with multiplicity, and the quadratic polynomial $f\in \Z_m[x_1,\ldots, x_n]$ weakly represents $\text{OR}_n$ modulo $m$, then
\[
n\le m^{Cd}\le m^{C\lg m}.
\] 
\end{thm}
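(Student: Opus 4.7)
My plan is to induct on $d$, the number of prime factors of $m$ with multiplicity, applying at each step the boolean-rank dichotomy outlined in the abstract with respect to a chosen prime divisor. The base case $d = 1$, where $m = p$ is prime, follows from the folklore lower bound $\De(\text{OR}_n, p) \ge n/(p-1)$: a quadratic weak representation forces $n \le 2(p-1) \le m^C$ for any $C \ge 1$. For the inductive step, I choose a prime $p \mid m$ and write $m = pm'$ with $m'$ having $d-1$ prime factors (counted with multiplicity). Reducing $f$ modulo $p$ and exploiting $x_i^2 = x_i$ on $B^n$, I write $\ol f(\mx) = \mx^\top M \mx + L(\mx) + c$ over $\F_p$ with the diagonal of $M$ absorbed into $L$. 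Define the \emph{boolean rank} $r(\ol f) := \min_{D\,\text{diagonal}} \rank_{\F_p}(M - D)$; this diagonal-rigidity notion is the right invariant because on $B^n$ replacing $M$ by $M + D$ changes $\ol f$ only by a linear polynomial, so the usual $\F_p$-rank of $M$ is not well-defined on the cube.

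\emph{Case 1: $r(\ol f) \le C_1 \log m$.} Writing $M - D$ in terms of $s = O(r)$ linear forms $\ell_1, \ldots, \ell_s$ (where $r := r(\ol f)$), we can express $\ol f(\mx) = Q(\ell_1(\mx), \ldots, \ell_s(\mx)) + L'(\mx) + c$ for some quadratic $Q$ in these forms plus a linear part. The level set $S = \{\mx : \ell_i(\mx) = 0 \text{ for all } i\}$ contains $\mathbf 0$ and has density $\ge p^{-O(r)} \ge m^{-O(1)}$ in $B^n$, and on $S$, $\ol f$ reduces to the linear function $L'(\mx) + c$. An additive-combinatorics argument extracting a combinatorial subcube through $\mathbf 0$ from a dense subset of $B^n$ yields a subcube $B^{n_1} \subseteq S$ of dimension $n_1 \ge n/m^{O(1)}$. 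A second application of the folklore bound for linear polynomials then extracts a further subcube $B^{n_2}$, of dimension $n_2 \ge n_1/p$, on which $\ol f \equiv f(\mathbf 0) \pmod p$; equivalently, $(f - f(\mathbf 0))/p$ is a well-defined quadratic polynomial modulo $m'$ that weakly represents $\text{OR}_{n_2}$. The inductive hypothesis gives $n_2 \le (m')^{C(d-1)}$, and unwinding bounds $n \le m^{Cd}$.

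\emph{Case 2: $r(\ol f) > C_1 \log m$.} A Weil/Fourier-type exponential-sum bound over $B^n$ yields
\[
\ab{\Pr_{\mx \in B^n}[\ol f(\mx) = a] - \rc{p}} \le p^{-\Om(r(\ol f))} \le m^{-\Om(1)}
\]
uniformly in $a \in \F_p$. Thus a $(1 - o(1))/p$ fraction of $\mx \in B^n$ satisfy $\ol f(\mx) = \ol f(\mathbf 0)$. The same subcube-extraction applied to this large level set reduces the modulus from $m$ to $m'$ on a subcube of dimension $\ge n/m^{O(1)}$, and induction again gives $n \le m^{Cd}$.

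The step I expect to be the main obstacle is the combinatorial subcube extraction common to both cases: converting a dense subset of $B^n$ cut out by $\F_p$-linear or quadratic constraints into a genuine combinatorial subcube $\{\mx : x_i = 0 \text{ for } i \in T\}$ through $\mathbf 0$, losing at most an $m^{O(1)}$ factor in the dimension. This is where additive combinatorics on the boolean cube enters nontrivially. A related difficulty is formulating the boolean rank so that it simultaneously controls the low-rank structural result and the high-rank exponential-sum bound; the diagonal-rigidity modification of $\F_p$-rank is precisely the invariant that respects the identity $x_i^2 = x_i$ on both sides of the dichotomy.
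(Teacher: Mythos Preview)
Your plan follows the Tardos--Barrington template---peel off one prime by passing to a subcube on which $f$ is constant modulo that prime, then induct on $d$---augmented with the rank dichotomy to control the subcube dimension. The paper takes a different route: it never passes to a subcube and never inducts on $d$. Instead it counts zeros of $f$ modulo the full $m$ via
\[
\frac{1}{2^n}\bigl|\{\mx:f(\mx)\equiv 0\}\bigr|=\frac1m+\frac1m\sum_{j\not\equiv 0}\EE_{\mx}\,e_m(jf(\mx)),
\]
orders the boolean ranks of $f$ modulo all prime powers $q\,||\,m$, and splits this sum at the largest gap in that sequence; above the gap the exponential sums are small by Theorem~\ref{thm:rigid}, below the gap the Davenport bound of Theorem~\ref{thm:2soln} directly lower-bounds the number of zeros modulo $m$, with no reduction of the modulus at any stage.

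The genuine gap in your plan is Case~2. You need a subcube of dimension $\ge n/m^{O(1)}$ on which $\ol f$ is constant modulo $p$, but high boolean rank is precisely the obstruction to this. Density $\approx 1/p$ alone cannot help: a pseudorandom $1/p$-dense subset of $B^n$ contains no subcube of dimension beyond $O(\log n)$. Nor does the algebraic structure: $\ol f$ is constant on a Hales--Jewett subcube of dimension $n'$ only if $\binom{n'}{2}+n'$ block-sums of its coefficients vanish in $\F_p$, and a first-moment count over random quadratics (which are $\Omega(n)$-rigid with high probability) shows that no such subcube exists once $n'\gg\sqrt{n\log n}$, far short of the $n/m^{O(1)}$ your induction requires for $d\ge 3$. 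Your Case~1 extraction works only because the constraints $\ell_i=0$ are \emph{linear}, so an iterated zero-sum (Davenport) argument applies; there is no analogue when the constraint is a single high-rank quadratic. This is exactly why the paper, in the high-rank regime, abandons subcubes and uses equidistribution to bound the zero-count modulo $m$ directly. (A secondary, patchable issue: over $B^n$ Weyl differencing gives $\bigl|\EE_{\mx}e_p(\ol f)\bigr|\le e^{-\Omega(r/p^2)}$, not $p^{-\Omega(r)}$, so your Case~1/Case~2 threshold should be $r\asymp p^2\log m$ rather than $C_1\log m$.)
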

The lower bound by Tardos and Barrington~\eqref{eq:TB} gives $n\le q^{2^r}$ where $q$ is the smallest prime power factor of $m$, and $r$ is the number of distinct prime factors. This gives $n \le 2^{\wt O(m)}$. Hence, Theorem~\ref{thm:main} improves this exponential upper bound to a quasipolynomial upper bound.

We conjecture that the correct upper bound is $n= O(m)$, or at the very least, we have $n=O(m^C)$. The $d$ loss comes from an inefficient way of dealing with multiple factors.

To prove Theorem~\ref{thm:main}, we define a new notion of \emph{boolean rank} (Definition~\ref{df:1rank}) for a quadratic polynomial $f$, which differs from the ordinary notion of rank in that it captures rank only over the boolean cube, and has connections to matrix rigidity. This notion of boolean rank enables us to split the proof into two cases that we consider independently. When the rank is low, we use additive combiantorics to show $f(\mx)=0$ must have many solutions. When the rank is high, we use Weyl differencing to show that $f$ is close to equidistributed. In either case, when $m$ is small $f(\mx)=0$ will have more than one solution and hence $f$ cannot represent $\text{OR}_n$.

\paragraph{Organization:} The outline of the rest of the paper is as follows. In the remainder of the introduction, we introduce related work and notations. In Section~\ref{sec:over} we give a more detailed overview of the proof. In Sections~\ref{sec:lowrank} and~\ref{sec:dich} we consider the low and high rank cases, respectively. In Section~\ref{sec:proof} we prove the main theorem. In Section~\ref{sec:thoughts} we speculate on ways to extend the argument to higher degree. Appendix~\ref{sec:app} contains facts we will need about linear algebra over $\Z_m$ when $m$ is composite.

\subsection{Related work}

The problem of finding the weak degree of $\text{OR}_n$ is connected to several other interesting problems. 
Firstly, polynomials representing $\text{OR}_n$ modulo $m$ can be used to construct matching vector families (MVF)~\cite{Gr00}, which can then be used to build constant-query locally decodable codes (LDCs) \cite{E, DGY}. A \emph{matching vector family} modulo $m$ is a pair of lists $s_1,\ldots, s_n,t_1,\ldots, t_n\in \Z_m^n$ such that 
\[
\an{s_i,t_j}\begin{cases}
=0,&i=j\\
\ne 0,&i\ne j.
\end{cases}
\]
If $f$ is a polynomial representing $\text{OR}_n$, then $f((2x_iy_i-x_i-y_i+1)_{1\le i\le n})=0$ iff $\mx = \my$. If this polynomial is $\sum a_{\al, \be}\mx^{\al}\my^{\be}$, then the corresponding MVF consists of the $2^n$ vectors $(a_{\al,\be}\mx^{\al})_{\al,\be},\mx\in B^n$ and $2^n$ vectors  $(\my^{\be})_{\al,\be},\my\in B^n$. 
The representation of $\text{OR}_n$ by symmetric polynomials already gives a subexponential-length LDC. 
There is an large gap between the upper bound and lower bound for constant-query locally decodable codes. For each positive integer $t$, there is a family of constant-query LDCs taking messages of length $n$ to length $\exp(\exp(O((\log n)^{\rc t}(\log \log n)^{1-\rc t})))$, while the best lower bound is $n^{1+\rc{\ce{\fc{q}2+1}}}$ for $q$ queries.
Thus narrowing the gap for $\De(\text{OR}_n,m)$ is a first step towards narrowing the gap for LDC's. 

Secondly, OR representations give explicit constructions of Ramsey graphs, and encompass many previous such constructions~\cite{Gr00, Gr00}. Gopalan defines OR representations slightly differently, as a pair of polynomials $P\pmod p$ and $Q\pmod q$ such that for $\mx\in B^n$, $P(\mx)=0$ and $Q(\mx)=0$ simultaneously only at $\mx=0$. The construction puts an edge between $\mx,\my\in B^n$ iff $P(\mx\opl \my)=0$. The probabilistic method gives nonexplicit graphs with $2^n$ vertices with clique number $\om$ and independence number $\al$ at most $(2+o(1))n$; the best OR representations give explicit graphs with $\om,\al\le e^{O(\sqrt{\log n})}$.

Recently, Bhomwick and Lovett \cite{BL} showed a barrier to lower bounds for the weak degree of $\text{OR}_n$: to prove strong lower bounds, one has to use properties of polynomials that are not shared by \emph{nonclassical polynomials}, because there exist nonclassical polynomials of degree $O(\log n)$ that represent $\text{OR}_n$. 
A nonclassical polynomial of degree $d$ is a function $f:\F_p^n\to \R/\Z$ such that $\De_{\mathbf h_1}\cdots \De_{\mathbf h_{d+1}}f = 0$ for all $\mathbf h_1,\ldots, \mathbf h_{d+1}\in \F_p^n$, where $\De_{\mathbf h}f(\mx):=f(\mx + \mathbf h)-f(\mx)$.  
Thus, to go beyond $\Om(\log n)$,  one cannot rely exclusively on the fact that the $d$th difference of a degree $d$ polynomial is constant, which is the core of techniques such as Weyl differencing. 
This barrier it not directly relevant to our work because nonclassical polynomials for degree $d=2$ can only appear in characteristic 2, and any such nonclassical polynomial $f:\F_2^n\to \R/\Z$ can be realized as a polynomial modulo 4, $4f:\Z_4^n\to \Z_4$.

The maximum $n$ such that a degree 2 polynomial can weakly represent $\text{OR}_n$ is not known. The best symmetric polynomial has $n=8$, but the true answer lies in the interval $[10, 20]$ \cite{TB}, as the polynomial $\pa{\sum_{i=1}^{10}x_i}+ 5(x_1x_{10}+x_2x_9+x_3x_8+x_4x_7 +x_5x_6)$ works for $n=10$.

\subsection{Notation} 
We use the following notation.
\begin{itemize}
\item
$B=\{0,1\}$. Note that we regard $B$ as a subset of $\Z$, hence distinguishing it from $\F_2$.
\item
Boldface font represents vectors; for instance $\mx \in B^n$ is the vector $(x_1,\ldots, x_n)$.
\item $\Z_m$ is the ring of integers modulo $m$.
\item
For $q=p^\al$ a prime power, write $q||m$ ($q$ fully divides $m$) to mean that $p^\al \mid m$ but $p^{\al+1}\nmid q$.
\item
Let $e_m(j) = e^{\fc{2\pi i j}{m}}$. Note this is well defined on $\Z_m$.
\end{itemize}

\section*{Acknowledgements}

Thanks to Zeev Dvir for his guidance and comments on this paper, and to Sivakanth Gopi for useful discussions. 

\section{Proof overview}
\label{sec:over}

It suffices to show that if $n>m^{Cd}$ and $f$ is a quadratic polynomial modulo $m$, then the number of zeros of $f$ is either 0 or $\ge 2$.

We first define the notion of \emph{boolean rank} (Definition~\ref{eq:1rank}). We say a quadratic $f$ has boolean rank at most $r$ if on the Boolean cube, it can be written as a function of $r$ linear forms. Boolean rank is useful because low boolean rank implies $f$ has many zeros, as we will show in Section~\ref{sec:lowrank}.
This is because if $f$ has low boolean rank, then $f(\mx)=0$ whenever $\mx$ solves a small system of linear equations modulo $m$. For example, if $f(\mx)=l_1(\mx)^2+l_2(\mx)^2$, then any solution to $l_1(\mx)=l_2(\mx)=0$ is a solution to $f(\mx)=0$.
Because we have reduced the problem to a linear problem, additive combinatorics comes into play. We use bounds on the Davenport constant \cite{GG} to show that there are many solutions. 

The difficult case is when $f$ has large boolean rank. In Section~\ref{sec:dich}, we show that roughly speaking, this implies $f$ is equidistributed (Theorem~\ref{thm:rigid}). 
Using orthogonality of characters, the fact that for $y\in \Z_m$, 
\[
\rc m \sum_{j\md m}e_m(jy) = \begin{cases}
0,&y\ne 0\\
1,&y=0
\end{cases}
\]
for any function $f:B^n\to \Z_m$, we can count the number of zeros of $f$ using the following exponential sum. (For a similar application of exponential sums in complexity theory, see \cite{Bo}.)
\begin{align}
|\set{\mx\in B^n}{f(\mx)=0}| &= \sum_{\mx\in B^n}\rc m \sum_{j\md m} e_m(jf(x))\\
\implies 
\rc{2^n}|\set{\mx\in B^n}{f(\mx)=0}|
&=\rc m + \rc m \sum_{j\nequiv 0\md m}  \EE_{\mx\in B^n} e_m(jf(x))\label{eq:ct}
\end{align}
If each exponential sum $\E_{\mx\in B^n} e_m(jf(x))$ is small, then the proportion of zeros approximately equals $\rc m$.
We show that high boolean rank implies that these sums are small.

A standard technique to bound an exponential sum is by Weyl differencing: squaring the sum effectively reduces the degree of $f$. Complications arise due to the fact that we are working in $B^n$ rather than the group $\F_2^n$. We will find that the sum is small when the matrix $A_f$ corresponding to $f$ has an off-diagonal submatrix of high rank (\eqref{eq:nonz} and Lemma~\ref{lem:wt}). We show that high boolean rank is equivalent to $A_f$ having high \emph{diagonal rigidity} (Proposition~\ref{pr:rr2}), which in turn implies that $f$ has such a off-diagonal submatrix of high rank (Lemma \ref{lem:rigid}), as desired. Note that diagonal rigidity is a special case of the widely studied notion of matrix rigidity due to Valiant~\cite{Val}.

Finally, we note two technical points. First, we need to define a notion of rank over $\Z_{p^\al}$. We collect the relevant definitions and facts in Appendix~\ref{sec:app}. This makes the proof more technical. For simplicity, the reader may consider the case when $m$ is a product of distinct primes, so that the usual notion of rank over $\F_p$ suffices. 

Secondly, note that if $f$ is already biased modulo $m_1$ for some $m_1\mid m$, then we expect~\eqref{eq:ct} to be biased as well. Thus we factor $m=m_1m_2$ and break the sum in~\eqref{eq:ct} up into $j\nequiv 0\pmod{m_1}$ and $j\equiv 0\pmod{m_1}$. Consider moving prime factors from $m_1$ to $m_2$. If the boolean rank increases slowly at each step, then the boolean rank modulo the ``worst'' prime is bounded, and we are in the low rank case. If the boolean rank increases too fast at any step, we will be in the high rank case. We conclude the theorem in this fashion in Section~\ref{sec:proof}.

\section{Low rank quadratic polynomials have many solutions}
\label{sec:lowrank}
\begin{df}\label{df:1rank}
The \textbf{rank} $\rank(f)$ of a quadratic polynomial $f$ modulo $m$ is the minimal $r$ such that there exists a function $F:\Z_m^r\to \Z_m$ and vectors $\mv_1,\ldots, \mv_r\in \Z_m^n$ such that for all $\mx \in \Z_m^n$,
\begin{equation}\label{eq:1rank}
f(\mx) = F(\mv_1^T\mx,\ldots, \mv_r^T\mx).
\eeq
Note this extends the definition of rank of a quadratic form (the homogeneous case).

The \vocab{boolean rank} $\text{brank}(f)$ is defined the same way, except that~\eqref{eq:1rank} only has to hold for $\mx\in B^n$.
\end{df}
Note that $F$ in Definition~\ref{df:1rank} has a special form here: it is a sum of squares with coefficients. However, we will not use the structure of $F$ in our arguments.

\begin{thm}\label{thm:2soln}
Let $f:B^n\to \Z_m$ be a quadratic polynomial modulo $m$. 
Suppose that for each prime power $q|| m$, $f\pmod q$ has boolean rank $r_q$. Let $r=\sum_{q||m}r_q$. If
$f(\mx)=0$ has a solution $\mx\in B^n$, then the following hold.
\begin{enumerate}
\item
If $n\ge mr\log m$ then $f$ has at least 2 solutions.
\item
$f$ has at least
\[
2^{n - mr \log m\log n}
\]
solutions in $B^n$. 
\end{enumerate}
\end{thm}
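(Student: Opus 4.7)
My plan is to combine the Chinese Remainder Theorem with the boolean rank decomposition to factor $f$ through a small abelian group $G$, then use bounds on the Davenport constant of $G$ to produce many ``twin'' zeros of $f$ by flipping coordinates of the known zero $\mx_0$.

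\textbf{Setup.} By the Chinese Remainder Theorem, $f(\mx) \equiv 0 \pmod m$ iff $f(\mx) \equiv 0 \pmod q$ for each prime power $q \| m$. The hypothesis that $f \bmod q$ has boolean rank $r_q$ provides linear forms $L_{q,i}(\mx) = \mv_{q,i}^\top \mx$ and a function $F_q$ with $f \equiv F_q(L_{q,1}(\mx), \ldots, L_{q,r_q}(\mx)) \pmod q$ on $B^n$. Set $G = \bigoplus_{q \| m} \Z_q^{r_q}$, so that $\exp(G) = m$ and $|G| \le m^r$, and let $\phi \colon B^n \to G$ be defined by $\phi(\mx) = (L_{q,i}(\mx))_{q,i}$. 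Any $\my \in B^n$ with $\phi(\my) = \phi(\mx_0)$ is automatically a solution of $f$. Writing $\mv_j := (\mv_{q,i,j})_{q,i} \in G$, $b'_j := (1 - 2x_{0,j})\mv_j$, and $u_j := \mathbf 1[y_j \ne x_{0,j}]$, one checks that $\phi(\my) - \phi(\mx_0) = \sum_j u_j b'_j$. Hence the number of solutions is at least $|K|$, where $K := \{u \in B^n : \sum_j u_j b'_j = 0\}$.

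\textbf{Part (1).} The key input is a bound on the Davenport constant: by the standard van Emde Boas--Kruyswijk/Meshulam bound, $D(G) \le \exp(G)(1 + \log(|G|/\exp(G))) \le C\,mr\log m$ for an absolute constant $C$. Thus if $n \ge D(G)$, the very definition of $D(G)$ supplies a nonempty $T \subseteq [n]$ with $\sum_{j \in T} b'_j = 0$; its indicator $u = \mathbf 1_T$ is a nonzero element of $K$, producing a second solution.

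\textbf{Part (2).} For many solutions I would first bound the image of the map $\Psi \colon B^n \to G$ given by $\Psi(u) := \sum_j u_j b'_j$. Every value in $\Psi(B^n)$ is achieved by some $u$ whose support has size at most $D(G)-1$: given any representative with larger support, Davenport supplies a zero-sum subset of the support, which we remove without changing $\Psi(u)$; iterating eventually drops the support size below $D(G)$. Therefore $|\Psi(B^n)| \le \sum_{k \le D(G)} \binom{n}{k} \le n^{D(G)}$, and the average fiber of $\Psi$ has size $\ge 2^n / n^{D(G)} = 2^{n - D(G) \log n}$. To convert this into a lower bound on the specific zero-fiber $|K|$, I would invoke the character-sum identity $|K| = |G|^{-1}\sum_{\chi \in \widehat G} \prod_j (1 + \chi(b'_j))$, in which the trivial character already contributes $2^n / |G|$ and the non-trivial characters can be bounded using $\exp(G) = m$. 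Plugging in $D(G) \le Cmr\log m$ then gives $|K| \ge 2^{n - Cmr\log m \log n}$, as claimed.

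\textbf{Main obstacle.} The principal difficulty is passing from the pigeonhole bound ``some fiber of $\Psi$ is large'' to ``the zero-fiber is large'', since a priori the zero-fiber can be strictly smaller than the maximum fiber. I expect the character-sum argument sketched above to suffice, provided the non-trivial character contributions can be uniformly controlled in terms of $\exp(G) = m$ and the structure of the $b'_j$; absent that, a fallback is to iteratively extract disjoint zero-sum subsets $T_1, \ldots, T_k$, whose $2^k$ subset-unions all lie in $K$, though this greedy strategy alone only yields the weaker bound $2^{n/D(G)}$.
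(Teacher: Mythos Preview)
Your setup and Part~(1) are correct and essentially identical to the paper's argument: both translate by the known solution $\mx_0$ (your $b'_j = (1-2x_{0,j})\mv_j$ is exactly the paper's implicit change of variables), reduce to a zero-sum problem in $G = \bigoplus_{q\|m} \Z_q^{r_q}$, and invoke the van Emde Boas--Kruyswijk bound on $D(G)$.

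For Part~(2), the gap you flag as the ``main obstacle'' is real, and the character-sum patch you propose does not close it. The trivial-character contribution $2^n/|G|$ is \emph{not} a lower bound for $|K|$: take $G=\Z_4$, $n=3$, $b'_1=b'_2=b'_3=1$; then $|K|=1$ while $2^n/|G|=2$, and indeed the two order-$4$ characters together contribute $-4$ to the numerator. There is no uniform control on the non-trivial characters using only $\exp(G)=m$, because nothing prevents many of the $b'_j$ from lying in the kernel of a given $\chi$. Your greedy fallback, as you note, only yields $2^{n/D(G)}$, which is far too weak.

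The paper's fix is a short geometric argument that avoids characters entirely. One shows that every point of $B^n$ lies within Hamming distance $d(G)$ of some element of $K$: if the nearest $u^\star\in K$ to a given $\my$ were at distance $\ge d(G)+1$, apply the Davenport property to any $d(G)+1$ of the elements $b'_j$ indexed by coordinates where $u^\star$ and $\my$ differ; the resulting nonempty zero-sum flips a subset of those coordinates and produces an element of $K$ strictly closer to $\my$, a contradiction. Hence the Hamming balls of radius $d(G)$ centred at $K$ cover $B^n$, so
\[
|K|\;\ge\;\frac{2^n}{\sum_{k\le d(G)}\binom{n}{k}}\;\ge\;2^{\,n-(d(G)+1)\log n},
\]
and plugging in $d(G)\le mr\log m$ gives the claim. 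This is in fact a sharpening of your image-size bound $|\Psi(B^n)|\le n^{D(G)}$: the same Davenport step that produced a small-support representative in each fibre, applied instead to the difference from the \emph{nearest} zero, shows directly that the zero fibre itself is large.
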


The theorem will be a consequence of the following.
\begin{thm}\label{thm:2soln2}
Let $\{\mv_{pi}\in (\Z_q)^n\}_{1\le i\le r_q, q|| m}$ be a collection of $r=\sum_{q|| m}r_q$ vectors.
Then the number of solutions to the system
\[
\mv_{pi}^T\mx =0,\qquad 1\le i\le r_q, q|| m
\]
in $B^m$ 
is at least 2 if $n\ge mr\log m$, and is at least $2^{n-mr\log m \log n}$.
\end{thm}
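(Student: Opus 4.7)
The plan is to recast the linear system as a zero-sum subsequence problem in a finite abelian group and invoke bounds on the Davenport constant.

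\emph{Setup.} Define
\[
G \;=\; \bigoplus_{q\|m} \Z_q^{r_q},
\]
which has exponent $\exp(G)=m$ (the lcm of the prime-power factors $q$) and order $|G|=\prod_{q\|m} q^{r_q}\le m^r$. For each $k\in[n]$, let $\mathbf{w}_k\in G$ be the vector whose $(q,i)$-entry is the $k$-th coordinate of $\mv_{pi}$ (where the prime-power label on $\mv$ is $q$). Then the conditions $\mv_{pi}^T\mx\equiv 0\pmod q$ for all $(q,i)$ are equivalent to $\sum_{k:\,x_k=1}\mathbf{w}_k=0$ in $G$, so $B^n$-solutions of the system are in bijection with zero-sum subsets of the sequence $(\mathbf{w}_k)_{k=1}^n$.

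\emph{Part 1.} The Olson / van~Emde~Boas--Kruyswijk bound (cited in~\cite{GG}) yields
\[
D(G)\;\le\;\exp(G)\bigl(1+\log(|G|/\exp(G))\bigr)\;=\;O(mr\log m),
\]
so after absorbing constants, $n\ge mr\log m$ implies $n\ge D(G)$. By the definition of the Davenport constant, the sequence $(\mathbf{w}_k)$ admits a non-empty zero-sum subset $S\subseteq[n]$. Then $\mx=\mathbbm{1}_S$ together with the trivial solution $\mx=\mathbf 0$ gives two distinct zeros, as desired.

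\emph{Part 2 and the main obstacle.} To produce many solutions, I iterate greedily: while the current set $U\subseteq[n]$ satisfies $|U|\ge D(G)$, apply the Davenport bound inside $U$ to extract a non-empty zero-sum $S_j\subseteq U$ of size at most $D(G)$, and remove it from $U$. After $k$ rounds I obtain pairwise disjoint zero-sums $S_1,\ldots,S_k$; since a union of disjoint zero-sums is again a zero-sum, every $T\subseteq[k]$ yields a distinct solution $\bigcup_{j\in T}S_j$, producing $2^k$ solutions in total. The crude accounting $k\ge\lfloor n/D(G)\rfloor$ already gives $2^{\Om(n/(mr\log m))}$ solutions, but the target $2^{n-mr\log m\log n}$ has a much larger exponent and is the main technical difficulty: the greedy step ``wastes'' up to $D(G)$ indices per extraction. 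Closing this gap should require either a multi-scale refinement --- recursively subdividing each extracted zero-sum to depth $O(\log n)$ so that nearly every index becomes ``free'' and only an $O(\log n)$ factor of $D(G)$ is ever paid --- or a Fourier/character-sum lower bound on the number of zero-sum subsets that loses only $O(\log|G|)=O(r\log m)$ in the exponent, amplified by a $\log n$ factor.
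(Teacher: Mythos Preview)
Your Part~1 is correct and is exactly how the paper proceeds: set $G=\bigoplus_{q\|m}\Z_q^{r_q}$, identify Boolean solutions with zero-sum subsets, and apply the van~Emde~Boas--Kruyswijk bound $d(G)\le m(1+\log(|G|/m))<mr\log m$.

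Part~2, however, has a real gap, and the fixes you speculate about (multi-scale refinement, Fourier lower bounds) are much harder than what is actually needed. The greedy disjoint-extraction argument is the wrong decomposition: it only yields $2^{n/D(G)}$ and cannot be pushed to $2^{n-D(G)\log n}$ by iteration alone, because each zero-sum you extract may genuinely have size $\Theta(D(G))$.

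The paper's argument is a short covering-code observation. First note that the solution set is closed under the following ``local move'': if $\mx_0$ is a solution and $C$ is any $(d(G)+1)$-dimensional subcube containing $\mx_0$, then applying the Davenport bound to the $d(G)+1$ elements $\{\pm\mathbf w_k: k\text{ a free coordinate of }C\}$ produces a second solution in $C$. Now take any $\my\in B^n$ and let $\mx$ be the solution closest to $\my$. If $d(\mx,\my)\ge d(G)+1$, pick $d(G)+1$ coordinates on which $\mx$ and $\my$ differ and form the corresponding subcube through $\mx$; the second solution in that subcube is strictly closer to $\my$, a contradiction. Hence every Hamming ball of radius $d(G)$ contains a solution, and a volume count gives at least
\[
\frac{2^n}{\sum_{k\le d(G)}\binom{n}{k}}\;\ge\;2^{\,n-(d(G)+1)\log n}\;\ge\;2^{\,n-mr\log m\,\log n}
\]
solutions. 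This replaces your greedy extraction entirely and gives the stated exponent with no further work.
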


The proof of this relies on a well-studied problem in additive combinatorics, that of determining the Davenport constant of a group. See \cite{GG} for a survey.
\begin{df}
Let $G$ be an abelian group. The \vocab{Davenport constant} of $G$, denoted $d(G)$ is the 
minimal $d$ such that for all $n>d$ and all $g_1,\ldots, g_n\in G$, the equation
\[
\sum_{i=1}^n x_ig_i = 0
\]
has a nontrivial solution $\mx \in B^n\bs\{0^n\}$.
\end{df}
\begin{thm}[{\cite[Theorem 3.6]{GG}}]
\label{thm:davenport}
Let $G$ be a nontrivial abelian group with exponent $m$. Then 
\[
d(G) \le (m-1) + m\log \fc{|G|}{m}.
\]
\end{thm}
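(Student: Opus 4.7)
The plan is to prove this by induction on $|G|$, with base case $G = \Z_m$ cyclic. In the cyclic case, $|G|/m = 1$ and the bound reduces to $d(\Z_m) \le m-1$: given any $m$ elements $g_1,\ldots,g_m \in \Z_m$, the $m+1$ partial sums $0, g_1, g_1+g_2, \ldots, g_1+\cdots+g_m$ cannot all be distinct in a set of size $m$, and any collision $s_a = s_b$ with $a<b$ yields the nonempty zero-sum consecutive subsequence $g_{a+1}+\cdots+g_b = 0$.

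For the inductive step, I would invoke the structure theorem to write $G = H \oplus K$ with $H \cong \Z_m$ a cyclic subgroup of order equal to the exponent (which exists precisely because $G$ has exponent $m$) and $K$ of order $|G|/m$ with exponent dividing $m$. Given a long sequence $g_1,\ldots,g_N$ in $G$, I would project it to $K$ and greedily extract nonempty zero-sum subsequences in $K$; each extraction consumes at most $d(K)+1$ elements, and the corresponding subsequence in $G$ sums to an element of $H$. The inductive hypothesis applied to $K$ (smaller order) bounds $d(K)$, and once $m$ such $H$-elements have been accumulated, the base case applied to $H \cong \Z_m$ furnishes a nonempty zero-sum subset of those lifts, whose union assembles into the desired zero-sum subsequence of $(g_i)$.

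The main obstacle is the quantitative accounting. A naive greedy extraction gives a multiplicative loss of the form $N \ge m(d(K)+1)$ rather than the desired additive $(m-1) + m\log(|G|/m)$. Closing this gap requires a sharper extraction scheme, for instance controlling the length of each extracted zero-sum subsequence more carefully, or passing through a tower of subgroups so that the logarithm emerges as a sum of $\log m$ terms rather than as a single multiplicative factor. An alternative route is the Fourier/character-sum approach of Olson, expressing the zero-sum count as $\frac{1}{|G|}\sum_{\chi \in \hat G}\prod_{i=1}^N (1+\chi(g_i))$: the trivial character contributes $2^N/|G|$, while the nontrivial characters satisfy $|1+\chi(g)| \le 2$ with strict inequality whenever $\chi(g) \ne 1$. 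Forcing the total count to exceed $1$ for $N$ above the stated threshold then produces the required nontrivial zero-sum subset, and the precise exponent $m\log(|G|/m)$ emerges from summing these strict-inequality gaps across characters.
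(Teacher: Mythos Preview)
The paper does not prove this statement; it is quoted from \cite{GG} as a black box and then fed into Lemma~\ref{lem:dav-num}. There is no proof in the paper to compare against.

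Taken on its own, your proposal stops short of a proof. The cyclic base case via partial sums is correct. You then correctly observe that the naive induction through $G=\Z_m\oplus K$ gives only $d(G)+1\le m\,(d(K)+1)$, which iterates to the exponential bound $d(G)\le |G|-1$, not the claimed logarithmic one; your proposed remedies (``sharper extraction scheme'', ``tower of subgroups'') are left as gestures rather than arguments, and indeed no inductive scheme of this shape is known to produce the $m\log(|G|/m)$ term. The character-sum route is a legitimate direction, but your final sentence papers over the central difficulty: a nontrivial character $\chi$ may satisfy $\chi(g_i)=1$ for \emph{every} $i$ (this happens whenever all the $g_i$ lie in $\ker\chi$), in which case $\prod_i(1+\chi(g_i))=2^N$ and there is no ``strict-inequality gap'' for that character at all. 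Controlling the full sum $\sum_{\chi\ne 1}\prod_i(1+\chi(g_i))$ therefore requires additional structural input that you do not supply. If you want to complete the argument, the survey \cite{GG} that the paper cites reproduces both the original combinatorial proof of van Emde Boas--Kruyswijk and an analytic variant; neither is the straight induction you start from.
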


We need to turn this existence result into a lower bound on the number of solutions.
\begin{lem}\label{lem:dav-num}
Let $G$ be a nontrivial abelian group.
The number of solutions $\mx \in B^n$ to 
\[
\sum_{i=1}^n x_i g_i=0
\]
is at least
\[
2^{n-(d(G)+1)\log n}.
\]
\end{lem}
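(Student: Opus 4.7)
The plan is to show that every point $\mx' \in B^n$ lies within Hamming distance at most $d := d(G)$ of some element of $V := \{\mx \in B^n : \sum_i x_i g_i = 0\}$. Since the Hamming ball of radius $d$ around any point of $B^n$ contains $\sum_{k=0}^d \binom{n}{k} \le n^{d+1}$ vectors (using $\binom{n}{k} \le n^k$, and assuming $n \ge 2$), a standard volume argument then yields
\[
|V| \ge \frac{2^n}{\sum_{k=0}^d \binom{n}{k}} \ge \frac{2^n}{n^{d+1}} = 2^{n - (d+1)\log n},
\]
which is the desired bound.

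For the distance claim, I would fix $\mx' \in B^n$, set $c := \sum_i x'_i g_i$, and define $h_i := (1 - 2x'_i) g_i \in G$ for each $i$. The key identity is that for any $S \subseteq [n]$, the vector $\mx := \mx' \oplus \one_S$ (flip $\mx'$ on $S$) satisfies $\sum_i x_i g_i = c + \sum_{i \in S} h_i$, so it suffices to exhibit $S$ with $|S| \le d$ and $\sum_{i \in S} h_i = -c$. A direct check shows that the initial choice $S_0 := \{i : x'_i = 1\}$ already works, since $\sum_{i \in S_0} h_i = -\sum_{i : x'_i = 1} g_i = -c$.

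I then shrink $S$ iteratively while preserving its sum. As long as $|S| > d$, the defining property of the Davenport constant, applied to the sequence $(h_i)_{i \in S}$, produces a nonempty subset $Z \subseteq S$ with $\sum_{i \in Z} h_i = 0$; I replace $S$ by $S \setminus Z$, which strictly decreases $|S|$ while maintaining $\sum_{i \in S} h_i = -c$. Since $|S|$ drops by at least one each iteration, the process must terminate with $|S| \le d$, and the resulting $\mx := \mx' \oplus \one_S$ lies in $V$ at Hamming distance $|S| \le d$ from $\mx'$.

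The only non-routine step is the iterative Davenport-based shrinking; the rest is elementary counting using $\binom{n}{\le d} \le n^{d+1}$. Because the shrinking procedure always succeeds as long as $|S|$ exceeds $d(G)$, I do not expect any serious obstacle — the main content of the lemma is really just the observation that Davenport's guarantee can be ``re-used'' to peel off zero-sum pieces from any target representation.
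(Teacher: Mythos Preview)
Your proof is correct and follows essentially the same approach as the paper: both establish that every point of $B^n$ lies within Hamming distance $d(G)$ of a solution, and then apply the covering/volume bound $|V|\ge 2^n/\sum_{k\le d(G)}\binom{n}{k}$. The only difference is cosmetic---you reach the covering claim constructively by iteratively peeling zero-sum subsets off the flip set (starting from the solution $\mathbf 0$), whereas the paper phrases the same idea as a minimality argument (if the nearest solution were at distance $\ge d(G)+1$, Davenport applied to the $(d(G)+1)$-dimensional slice pointing toward the target would yield a strictly closer one).
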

\begin{proof}
Given a solution $\mathbf x_0$, we can apply the definition of $d(G)$ to $\mx - \mx_0$. Hence we see that any $(d(G)+1)$-dimensional slice of $B^n$ that has 1 solution must have another solution. 

Now we claim that every Hamming ball of radius $d(G)$ must have at least 1 solution.  Consider a point $\my$. Take a point $\mx$ solving the equation such that $d(\mx,\my)$ is minimal. If $d(\mx,\my)\ge d(G)+1$, then consider the $d(G)+1$-dimensional slice of $B^n$ that contains $\mx$ and such that moving in any of the $d(G)+1$ directions brings $\mx$ closer to $\my$. There must be another point in this hypercube that solves the equation, contradicting the minimality of $\mx$.

Every Hamming ball of radius $d(G)$ has at least 1 solution, so by counting in two ways, the number of solutions is at least $ \rc{\sum_{k=0}^{d(G)} \binom{n}{k}}2^n = 2^{n-(d(G)+1)\log n}$.
\end{proof}

\begin{proof}[Proof of Theorem~\ref{thm:2soln2}]
This is exactly the equation in the definition of the Davenport constant, where $G=\prod_{q|| m}(\Z_q)^{r_q}$ and $g_i = (v_{qi}^Te_i)_{1\le i\le r_q,q|| m}$. 
The Davenport constant satisfies
\[
d(G)\le  (m-1)+m\log \fc{|G|}m < mr \log m - 1.
\]
Now apply Lemma~\ref{lem:dav-num}.
\end{proof}
\begin{proof}[Proof of Theorem~\ref{thm:2soln}]
By definition of boolean rank there exist $\mv_1,\ldots, \mv_r\in \Z_m^n$ such that for all $\mx \in \Z_m^n$,
\[
f(\mx) = F(\mv_1^T\mx,\ldots, \mv_r^T\mx).
\]
\Wog, $F(\mathbf 0)=0$, so that $f(\mx)=0$ whenever $\mv_1^T\mx=\cdots= \mv_r^T\mx=0$. Now use Theorem~\ref{thm:2soln2}
\end{proof}

\section{High rank implies equidistribution}
\label{sec:dich}

In this section we prove the following theorem.
\begin{thm}[High rank implies equidistribution]\label{thm:rigid}
Let $m>1$ be a positive integer. 
Let $f\in \Z_m[\mx]$ be a quadratic polynomial in $n$ variables.  If there exists a factor $q|| m$ such that $f$ modulo $q$ has boolean rank at least $\Om(m^2\log \prc{\ep})$, 
then 
\[
\ab{
\EE_{\mx\in B^n}e_m(f(\mx))
}<\ep.\]
\end{thm}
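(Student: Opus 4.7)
The plan is to perform Weyl differencing (Cauchy--Schwarz on the boolean cube) together with the structural input that high boolean rank forces the coefficient matrix $A_f$ to have a high-rank off-diagonal block. First, fix a prime power $q\|m$ for which $f\bmod q$ has boolean rank $R\ge C_0 m^2\log\prc{\ep}$. By Proposition~\ref{pr:rr2}, the matrix $A_f\bmod q$ has diagonal rigidity at least $R$, and then Lemma~\ref{lem:rigid} produces disjoint index sets $S,T\subseteq[n]$ such that the off-diagonal submatrix $B:=A_f[S,T]$ has rank $r=\Om(R)$ over $\Z_q$.

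Partition $[n]=S\sqcup T\sqcup U$, and decompose
\[
f(\mx)=\tilde f_S(\mx_S)+\tilde f_T(\mx_T)+\mx_S^T B\mx_T+(\text{terms depending only on }\mx_U),
\]
where conditioning on $\mx_U$ absorbs the $SU$- and $TU$-cross terms into the linear parts of the quadratic polynomials $\tilde f_S,\tilde f_T$. By the triangle inequality over $\mx_U$, it suffices to bound $\ab{\EE_{\mx_S,\mx_T} e_m(\tilde f_S(\mx_S)+\tilde f_T(\mx_T)+\mx_S^T B\mx_T)}$ uniformly in $\mx_U$. Apply Cauchy--Schwarz to the expectation over $\mx_S$: after squaring and introducing a dummy copy $\my_T$, the bilinear term becomes $\mx_S^T B(\mx_T-\my_T)$, which is linear in $\mx_S$ and thus factors coordinatewise. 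Taking absolute values yields
\[
\ab{\EE_{\mx_S,\mx_T} e_m(\cdots)}^2 \le \EE_{\mathbf w\sim \mu^{|T|}}\prod_{i\in S}\ab{\cos\pa{\pi (B\mathbf w)_i/m}},
\]
where $\mu$ is the distribution of $x-y$ for independent uniform $x,y\in B$, supported on $\{-1,0,1\}$ with weights $1/4,1/2,1/4$. Crucially, $B$ does not depend on $\mx_U$, so the bound is uniform in $\mx_U$.

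The main obstacle is the final step: bounding this cosine-product expectation by $\ep^2$ using only $\rank_{\Z_q}(B)\ge r$, which is essentially the content of Lemma~\ref{lem:wt} and~\eqref{eq:nonz}. The elementary pointwise bound $\ab{\cos(\pi c/m)}\le e^{-\Om(1/m^2)}$ for $c\not\equiv 0\pmod q$ (valid since $q\mid m$) gives $\prod_{i\in S}\ab{\cos(\pi(B\mathbf w)_i/m)}\le \exp(-\Om(N(\mathbf w)/m^2))$, where $N(\mathbf w):=\#\set{i\in S}{(B\mathbf w)_i\not\equiv 0\pmod q}$. The crux is therefore to show $N(\mathbf w)\ge \Om(r)$ with probability $\ge 1-O(\ep^2)$ over $\mathbf w\sim \mu^{|T|}$. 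Choosing a subset $I\subseteq S$ with $|I|=r$ whose corresponding rows $\{\mathbf b_i\}_{i\in I}$ of $B$ are linearly independent over $\Z_q$ gives $r$ nontrivial linear forms $\mathbf w\mapsto(B\mathbf w)_i$; a one-coordinate anticoncentration estimate (using that for any nonzero $b\in\Z_q$, $\Pr_{w\sim\mu}[bw\equiv 0\pmod q]\le \fc{1}{2}$) gives $\Pr[(B\mathbf w)_i=0]\le 1-\delta$ for an absolute $\delta>0$, and the linear independence of the $\mathbf b_i$ over $\Z_q$ controls the pairwise covariances enough to apply a second-moment (or exponential-moment) concentration argument --- conveniently formulated via the character identity $\one[(B\mathbf w)_i\equiv 0]=\rc q\sum_{k\in\Z_q}e_q(k(B\mathbf w)_i)$. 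Combining, $\ab{\EE_\mx e_m(f(\mx))}^2\le \exp(-\Om(r/m^2))+O(\ep^2)<\ep^2$ for $C_0$ chosen sufficiently large, completing the proof.
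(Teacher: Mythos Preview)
Your strategy is essentially the paper's: Weyl/Cauchy--Schwarz differencing, then rigidity $\Rightarrow$ an off-diagonal block of high rank, then a Lemma~\ref{lem:wt}-type concentration bound on the number of nonzero coordinates of $B\mathbf w$. The only structural difference is that you fix the partition $S\sqcup T\sqcup U$ \emph{before} differencing (using Lemma~\ref{lem:rigid}), whereas the paper differences over all of $[n]$ first and then shows that the \emph{random} partition $I_1\sqcup I_2$ induced by $\Supp(\mathbf h)$ preserves the rank of the off-diagonal block with high probability. Your ordering is arguably cleaner --- it eliminates the paper's three ``rank-survives-random-restriction'' steps --- at the cost of having to analyse $\mathbf w\sim\mu^{|T|}$ rather than $\mathbf w\in\{\pm1\}^{|T'|}$ directly.

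The one genuine soft spot is your justification of the final concentration step. The phrase ``linear independence of the $\mathbf b_i$ controls the pairwise covariances enough for a second-moment argument'' is not right: the indicators $X_i=\one[(B\mathbf w)_i\equiv 0]$ can all be positively correlated with covariances of constant size, so $\Var(\sum X_i)=\Theta(r^2)$ and Chebyshev gives nothing. What \emph{does} work --- and is morally what Lemma~\ref{lem:wt} encodes --- is the multiplicative bound
\[
\Pr_{\mathbf w\sim\mu^{|T|}}\bigl[\forall i\in I',\ (B\mathbf w)_i\equiv 0\pmod q\bigr]\le 2^{-|I'|}
\]
for every $I'\subseteq I$, proved by putting $B_{I'}$ in row-echelon form over $\Z_q$ and revealing the coordinates $w_j$ from right to left (each pivot column constrains its $w_{j_i}$ to one residue class mod $p^{\ge 1}$, which has $\mu$-mass $\le\tfrac12$). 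From this, either a union bound over the $\binom{r}{tr}$ choices of which coordinates vanish (exactly the paper's Lemma~\ref{lem:wt} computation) or the exponential-moment bound $\E[2^{\sum X_i}]\le(3/2)^r$ gives $\Pr[N(\mathbf w)\le tr]\le e^{-\Omega(r)}$. So your conclusion stands, but the mechanism is this joint bound, not pairwise covariance control or the character identity.
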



First we give a different interpretation for the (boolean) rank. For simplicity, suppose $m=p$ is prime. 
The boolean rank does not change if $f$ changes by a constant, so assume $f$ has constant term 0. For any linear form $f_0$, on $B^n$ we can treat $f+f_0$ as a quadratic form because if $\mx \in B^n$, then $x_i=x_i^2$. 
Hence,
\[
\text{brank}(f)\le 1+ \min_{f_0\text{ linear}}\rank(f+f_0).
\]

Equivalently, when $p\ne 2$, we can think in terms of the matrix $A_f$ corresponding to $f$. Here $A_f$ is the matrix such that $f(\mx) = \mx^T A_f \mx$, i.e., the matrix of the bilinear form $\rc2[f(\mx+\my) - f(\mx)-f(\my)]$. 
By using $x_i=x_i^2$, we have that linear forms $f_0$ corresponds to a diagonal matrices, so 
\[
\text{brank}(f)\le 1 + \min_{D\text{ diagonal}}\rank (A_f + D).
\]
This motivates the following definition. (For the definition of matrix rank when $m$ is composite, see Appendix~\ref{sec:app}.)
\begin{df}\label{df:rigid}
Let $A$ be a matrix over $\Z_m$. We say $A$ is \vocab{$r$-diagonal rigid} if for all diagonal matrices $D$, $\rank (A+D)\ge r$.
\end{df}
Diagonal rigidity is related to a more widely studied notion of matrix rigidity, in which the matrix $D$ can be any sparse matrix. Matrix rigidity is an extensively studied problem with many applications to complexity theory. (See~\cite{rigid} for a survey.) 

We formalize our argument above as the following proposition. The argument extends to prime powers because it still holds that a quadratic form $f$ depends only on the projection of $\mx$ in $\rank(A_f)$ directions (Proposition~\ref{pr:rr}).
\begin{pr}\label{pr:rr2}
Let $m$ be a prime power, $f$ a quadratic polynomial. If $2\mid m$, assume $f$ has even coefficients.
If $A_f$ is $r$-rigid, then
\[
\text{brank}(f)\le r+1.
\]
\end{pr}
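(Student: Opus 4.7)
The plan is to formalize the sketch preceding the proposition, whose crux is the identity $x_i^2 = x_i$ on $B^n$: this lets any linear form be traded for a diagonal contribution to the matrix $A_f$ associated with $f$. First I set up $A_f$ as the symmetric matrix of the bilinear form $\frac12(f(\mx+\my) - f(\mx) - f(\my) + f(\mathbf 0))$. Over $\Z_m$ with $m$ an odd prime power this is immediate since $2$ is a unit; when $2 \mid m$, the even-coefficient hypothesis on $f$ is precisely what allows the factor $\frac12$ to be carried out (following the conventions of Appendix~\ref{sec:app}). Since boolean rank is invariant under additive constants, I assume $f$ has vanishing constant term.

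The key observation is that for any $\mathbf c = (c_1,\dots,c_n) \in \Z_m^n$, the linear form $\ell_{\mathbf c}(\mx) := \sum_i c_i x_i$ agrees on $B^n$ with the pure quadratic form $\mx^T D_{\mathbf c}\,\mx$, where $D_{\mathbf c} = \diag(c_1,\dots,c_n)$, since $x_i^2 = x_i$ on $B$. Consequently, for any diagonal $D$ over $\Z_m$, the polynomial $f + \ell$ (with $\ell$ the linear form whose coefficient vector is the diagonal of $D$) coincides on $B^n$ with the quadratic form $\mx^T (A_f + D)\,\mx$. Invoking Proposition~\ref{pr:rr}, which realizes a quadratic form of rank $r'$ (in the appendix's sense over $\Z_{p^\alpha}$) as a function of $r'$ linear forms, I deduce that $f + \ell$ is a function of at most $\rank(A_f + D)$ linear forms on $B^n$; subtracting $\ell$ shows that $f$ is a function of at most $\rank(A_f + D) + 1$ linear forms there.

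Minimizing over diagonal matrices $D$ gives the uniform inequality $\text{brank}(f) \le 1 + \min_D \rank(A_f + D)$. The proposition is then immediate: taking $r$ to be the largest parameter for which $A_f$ is $r$-diagonal rigid, namely $r = \min_D \rank(A_f + D)$, yields $\text{brank}(f) \le r + 1$.

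The main technical obstacle is the quadratic-form step over $\Z_{p^\alpha}$ rather than over a field: defining the rank of a matrix and diagonalizing a quadratic form over a non-field ring require care, and the identity ``rank-$r'$ quadratic form $\leftrightarrow$ function of $r'$ linear forms'' is not automatic there. This is exactly the role of Proposition~\ref{pr:rr} and Appendix~\ref{sec:app}; the even-coefficient hypothesis when $2 \mid m$ serves the same purpose of making $A_f$ meaningful modulo $m$.
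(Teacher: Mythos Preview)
Your proof is correct and follows exactly the paper's own approach: the informal discussion preceding the proposition (use $x_i^2=x_i$ on $B^n$ to trade linear forms for diagonal matrices, then invoke Proposition~\ref{pr:rr} to equate the rank of the resulting quadratic form with the matrix rank of $A_f+D$) is the entire argument the paper gives. One minor bookkeeping point: in the step ``$f+\ell$ coincides on $B^n$ with $\mx^T(A_f+D)\mx$'' you should also absorb the linear part of $f$ itself into the diagonal (so the matrix is really $A_f+D_{\ell_f}+D$), but this is harmless since it disappears once you minimize over all diagonal $D$.
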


Before we prove Theorem~\ref{thm:rigid}, we need a few lemmas.
\begin{lem}\label{lem:linear-sum}
Let $m$ be a positive integer and let $f:B^n\to \Z_m$ be given by a linear polynomial modulo $m$ involving $t$ variables:
\[f(\mx) = \sum_{j=1}^t a_jx_{i_j},a_j\ne 0.\] 
Then
\[
\ab{\EE_{\mx\in B^n} e_m(f(\mx))}\le \pa{1-\rc{m^2}}^t\le e^{-\fc{t}{m^2}}.
\]
\end{lem}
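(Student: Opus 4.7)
The plan is to exploit the linearity of $f$ to reduce the expectation to a product over the $t$ independent coordinates, and then bound each factor by elementary trigonometry.

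First I would write $f(\mathbf{x}) = \sum_{j=1}^t a_j x_{i_j}$, and note that since the variables $x_{i_1}, \ldots, x_{i_t}$ are independent uniform Bernoulli under the product measure on $B^n$ and $f$ depends additively on them, the expectation factors:
\[
\EE_{\mathbf{x}\in B^n} e_m(f(\mathbf{x})) \;=\; \prod_{j=1}^t \EE_{x \in B} e_m(a_j x) \;=\; \prod_{j=1}^t \frac{1 + e_m(a_j)}{2}.
\]
Taking absolute values, the bound reduces to showing that $\left|\frac{1+e_m(a)}{2}\right| \le 1 - \frac{1}{m^2}$ for every $a \in \{1,\ldots,m-1\}$.

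Second, I would compute $\left|\frac{1+e_m(a)}{2}\right|^2 = \frac{1+\cos(2\pi a/m)}{2} = \cos^2(\pi a/m)$, so the factor equals $|\cos(\pi a/m)|$. Over $a \in \{1,\ldots,m-1\}$ this quantity is maximized at $a=1$ (or $a=m-1$), giving $\cos(\pi/m)$. Then I would use $\cos(\pi/m) = 1 - 2\sin^2(\pi/(2m))$ together with the elementary inequality $\sin(x) \ge 2x/\pi$ on $[0,\pi/2]$, which yields $\sin(\pi/(2m)) \ge 1/m$ and hence
\[
\cos(\pi/m) \;\le\; 1 - \frac{2}{m^2} \;\le\; 1 - \frac{1}{m^2}.
\]

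Finally, multiplying over the $t$ factors gives $\left(1 - \tfrac{1}{m^2}\right)^t$, and the second inequality in the lemma follows from $1 - x \le e^{-x}$. The only real content of the proof is the trigonometric estimate in step two; there is no genuine obstacle here, as the lemma is essentially a one-line calculation once the factorization is made explicit.
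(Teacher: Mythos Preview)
Your proof is correct and follows essentially the same approach as the paper: factor the expectation over the $t$ independent coordinates, reduce each factor to $|\cos(\pi a/m)|$, and bound this by $\cos(\pi/m)\le 1-\tfrac{1}{m^2}$. You are slightly more explicit than the paper in justifying the last inequality (via $\sin x\ge 2x/\pi$, which in fact gives the stronger $\cos(\pi/m)\le 1-\tfrac{2}{m^2}$), but the argument is otherwise identical.
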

\begin{proof}
The sum decomposes as a product over the coordinates:
\bal
\ab{\E_{\mx\in B^n} e_m(f(\mx))}
&\le \ab{\prod_{j\in [n]} \EE_{x_j\in B} (e_m(a_{i_j}x_j))}\\
&=\prod_{j\in [n]} \af{1+e_m(a_{i_j})}{2}\\
&\le\prod_{j\in [n]} \af{1+e_m(1)}{2}\\
&\le \pa{1-\rc{m^2}}^t.
\end{align*}
In the last step we use $\af{1+e_m(1)}{2} = \cos\pf{\pi}m\le 1-\rc{m^2}$.
\end{proof}

Next we show that a symmetric, rigid matrix has a large off-diagonal submatrix of full rank. The main technicality comes from working over composite moduli.
\begin{lem}\label{lem:rigid}
Let $A$ be a matrix over $\Z_m$, where $m=p^\al$ is a prime power. 

Suppose $A$ is symmetric and $r$-rigid, $r\ge 6$. Then there exist disjoint sets of indices $I_1,I_2$ such that $A_{I_1\times I_2}$ is a square matrix of full rank, with rank at least $\rc{4}r$. 
\end{lem}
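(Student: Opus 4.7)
The plan is to reduce to the zero-diagonal case and then pick a full-rank $r\times r$ submatrix whose row and column supports overlap as little as possible, from which an off-diagonal block of the required rank falls out.

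First, set $B = A - \diag(A)$. Because the families $\{A + D : D \text{ diagonal}\}$ and $\{B + D : D \text{ diagonal}\}$ coincide, $B$ is also $r$-rigid, so in particular $\rank(B)\ge r$. Since any submatrix $B_{I_1, I_2}$ with $I_1 \cap I_2 = \emptyset$ agrees with $A_{I_1, I_2}$, it suffices to produce disjoint $I_1, I_2$ with $B_{I_1, I_2}$ a square submatrix of full rank at least $r/4$. Ranks and linear independence over $\Z_{p^\alpha}$ are taken in the unit-rank sense of Appendix~\ref{sec:app}, so the argument proceeds as if over the residue field $\F_p$.

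Next, pick any full-rank $r\times r$ submatrix $B_{R, C}$ (possible since $\rank(B)\ge r$) and, among all such choices, minimize $k := |R \cap C|$. Let $K = R \cap C$, $R' = R \setminus C$, and $C' = C \setminus R$, so that $|R'| = |C'| = r - k$. A direct check shows that the three blocks $B_{K, C'}$, $B_{R', K}$, $B_{R', C'}$ are honest off-diagonal submatrices of $B$ (each has disjoint row and column index sets), while $B_{K,K}$ is principal with zero diagonal. Splitting $B_{R, C}$ by columns,
\[
r = \rank(B_{R, C}) \le \rank(B_{R, K}) + \rank(B_{R, C'}) \le k + (r-k) = r,
\]
both inequalities are tight, so $\rank(B_{R, C'}) = r - k$. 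Splitting those columns' rows then gives
\[
r - k \le \rank(B_{K, C'}) + \rank(B_{R', C'}),
\]
so at least one of the off-diagonal blocks $B_{K, C'}$, $B_{R', C'}$ has rank at least $(r-k)/2$. If $k \le r/2$, this is at least $r/4$; extracting a full-rank square sub-submatrix (by the $\Z_{p^\alpha}$-linear-algebra machinery of Appendix~\ref{sec:app}) produces the required $(I_1, I_2)$.

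The remaining step is therefore to show that the minimum $k$ is at most $r/2$. I would argue this by contradiction, combining minimality with rigidity. Suppose $k > r/2$. Minimality of $k$ says that for every $j \in K$, no single substitution of an element of $[n] \setminus C$ for $j$ inside $R$ preserves full rank; by the symmetry of $B$, the same holds for column swaps. Unpacking each obstruction gives a linear dependence among certain rows $\{B_{i, C}\}_{i \in [n] \setminus C}$ modulo $\{B_{i, C}\}_{i \in R \setminus \{j\}}$. Packaging these dependencies across $j \in K$ into a diagonal matrix $D$ supported on $K$ yields $\rank(B + D) < r$, contradicting the $r$-rigidity of $B$. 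The hypothesis $r \ge 6$ provides the slack that rules out small-$K$ degenerate configurations of this exchange.

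The main obstacle is making this last exchange-to-rigidity-violation step rigorous: the swap obstructions, individually, are only partial linear dependencies, and combining them across all of $K$ into a single diagonal perturbation that provably drops the rank below $r$ while simultaneously respecting the symmetry of $B$ and the unit-minor notion of rank over $\Z_{p^\alpha}$ requires careful bookkeeping.
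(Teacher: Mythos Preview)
Your reduction in steps 1--4 is sound: passing to $B=A-\diag(A)$ preserves rigidity, and the block-rank inequalities you use hold over $\Z_{p^\alpha}$ with the paper's notion of rank. But step 5 is the entire content of the lemma, and your sketch does not establish it.

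The swap obstructions you extract from minimality of $k$ say only that, for each $j\in K$ and each $i\notin C$, the row $B_{i,C}$ lies in $\operatorname{span}\{B_{\ell,C}:\ell\in R\setminus\{j\}\}$. Intersecting over $j\in K$, this forces every row of $B$ outside $C$, restricted to the columns $C$, to lie in the $(r-k)$-dimensional space spanned by the rows $R'=R\setminus C$. That is a statement about $B_{[n]\setminus C,\,C}$; it gives you no handle on the diagonal entries of $B$ on $K$, and hence no recipe for a diagonal $D$ supported on $K$ with $\rank(B+D)<r$. The leap ``packaging these dependencies \dots\ into a diagonal matrix $D$'' is precisely the missing idea, not a bookkeeping detail. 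Note also that rigidity is genuinely needed here and cannot be replaced by rank alone: $B=J-I$ over a large field is symmetric with zero diagonal and full rank $n$, yet every off-diagonal block is an all-ones matrix of rank $1$. So whatever argument closes step 5 must use rigidity in a substantive way, not merely invoke $\rank(B)\ge r$ as you do in steps 2--4.

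The paper's proof avoids this difficulty by working directly with the contrapositive: assume the maximum off-diagonal rank is $s$, fix a maximal off-diagonal block $A_{I_1'\times I_2'}$ of rank $s$ sitting inside $A_{[1,\lfloor n/2\rfloor]\times[\lfloor n/2\rfloor+1,n]}$, and then for each $t$ in the top half choose $D_{tt}$ so that row $t$ of $A+D$, restricted to columns $[s+1,n]$, becomes a combination of rows $1,\dots,s$. Maximality of $s$ is what guarantees the required linear dependence extends to the column $t$ once $D_{tt}$ is chosen. Doing this for both halves gives $\rank(A+D)\le 4s$, so $r$-rigidity forces $s\ge r/4$. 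This construction is exactly the kind of ``diagonal choice'' your step 5 gestures at but does not supply; if you try to carry it out starting from your minimal-$k$ submatrix, you will find yourself redoing the paper's argument with the maximal off-diagonal block playing the role of your $B_{R',C}$. Finally, the hypothesis $r\ge 6$ is not doing the work you attribute to it; it is a mild technical assumption, not the source of the exchange argument.
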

\begin{proof}
Suppose $A$ is a $n\times n$ matrix. 

If there are disjoint $I_1$, $I_2$ such that $A_{I_1\times I_2}$ has rank at least $\rc4r$, then the result follows because we can find a square submatrix of full rank (Proposition~\ref{pr:fullrank}).

We show the contrapositive: if the maximum rank of an off-diagonal submatrix is $s\ge1$, then there exists a diagonal matrix $D$ so that $\rank(A+D)\le 4s$. 

Take the off-diagonal matrix of maximal rank. To break ties, choose the matrix whose rows generate the largest subgroup. By Proposition~\ref{pr:fullrank} there is a submatrix whose rows and columns generate an isomorphic subgroup.
Without loss of generality, assume that it has row indices $I_1'=[1,s]$ and column indices $I_2'=[\ff n2 + 1, \ff n2+s+1]$. The matrix $A_{I_1\times I_2}$, $I_1=[1,\ff n2], I_2=[\ff n2 + 1,n]$ also rank $s$.

Now we show that we can pick the first $\ff n2$ entries of $D$ so that $(A+D)_{[1,\ff n2]\times [1,n]}$ has rank at most $2s$. We will also be able to carry out the same procedure on the last $\ce{\frac n2}$ rows by considering the reflection of $A_{I_1'\times I_2'}$ across the diagonal, giving the total of $4s$.

For $s+1\le t\le \fc n2$, consider the matrix $A_{[1,s]\cup \{t\}\times [\ff n2+1,n]\bs \{t\}}$.  
Let $\mv_1,\ldots, \mv_s,\mv_t$ be its rows.
Of all off-diagonal rank-$s$ matrices, $A_{I_1'\times I_2'}$ generates the largest subgroup. Now $A_{[1,s]\cup \{t\}\times [s+1,n]\bs \{t\}}$ contains this matrix so its $t$th row is a linear combination of the previous rows,
\begin{equation}\label{eq:ln}
\mv_t = \sum_i a_i \mv_i.
\eeq
Let us be more precise: The set of $\textbf{a}$ that satisfy~\eqref{eq:ln} is $\mathbf{a}_t + (\text{lnull}(A_{I_1\times I_2}),0) \in \Z_m^s\times \Z_m$ where lnull denotes the left nullspace and $\mathbf{a}_t$ is a particular solution to~\eqref{eq:ln}. In other words,
\begin{equation}\label{eq:lnull}
\text{lnull}(A_{[1,s]\cup \{t\}\times [s+1,n]\bs \{t\}}) = (\text{lnull}(A_{I_1\times I_2}), 0) + \an{(\mathbf{a}_t, -1)}
\subeq \Z_m^s \times \Z_m
\eeq

Now add in the $t$th column: consider the matrix $(D+A)_{[1,s]\cup \{t\}\times [s+1,n]}$. Choose $D_{tt}$ so that
\[
(D+A)_{tt} = \sum_{i=1}^s a_i A_{it}.
\]

Choosing $D_{tt}$ in this way for $s<t\le \ff n2$, we find that the left nullspace of $(D+A)_{[1,\ff n2]\times [s+1,n]}$ is generated by
\begin{align*}
(\text{lnull}(A_{I_1\times I_2}),& 0, \ldots , 0)\\
(\mathbf{a}_{s+1},& -1, \ldots, 0)\\
\vdots&\\
(\mathbf{a}_{\ff n2},& 0, \ldots, -1),
\end{align*}
and hence isomorphic to $\text{lnull}(A_{I_1\times I_2})\times \Z_m^{\ff n2 - s}$. Thus as groups,
\bal
\text{rowspace}((D+A)_{[1,\ff n2]\times [s+1,n]}) 
&\cong \Z_m^{\ff n2} / \text{lnull}((D+A)_{[1,\ff n2]\times [s+1,n]}) \\
&\cong \Z_m^{\ff n2} / \text{lnull}(A_{I_1\times I_2})\times \Z_m^{\ff n2 - s}\\
&\cong \Z_m^s / \text{lnull}(A_{I_1\times I_2}) \cong \text{rowspace}(A_{I_1\times I_2}).
\end{align*}
Hence
\[
\rank ((D+A)_{[1,\ff n2]\times [s+1,n]})  = \rank(A_{I_1\times I_2})=s,
\]
as needed.

Finally, for any choice of $D_{ii},1\le i\le s$, $(D+A)_{[1,\fc n2]\times [1,n]}$ has rank $\le 2s$. This completes the proof.
\end{proof}

\begin{proof}[Proof of Theorem~\ref{thm:rigid}]
By Proposition~\ref{pr:rr2}, a lower bound for the boolean rank gives a lower bound for the rigidity of $A_f$. If $q$ is a power of 2 and $f$ has odd coefficients, then $A_f$ is not well defined. In this case we can replace $m$ by $2m$ and $f$ by $2f$. This neither changes the boolean rank nor the exponential sum. Hence we can assume $A_f$ is $\Om(m^2\log \prc{\ep})$-rigid over $\Z_q$.

We use Weyl's differencing technique. To bound the exponential sum we square it to reduce the degree of the polynomial in the exponent. We have to be careful of the fact that we are working in $B^n$ rather than $\F_2^n$, so the differences are not allowed to ``wrap around." For a function $f$ defined on $B^n$, and $\mathbf h\in \{-1,0,1\}^n$, define
\[
\De_{\mathbf h}f(\mx) = f(\mx+\mathbf h)-f(\mx)
\]
when $\mx+\mathbf h\in B^n$.

We have
\begin{align}
\ab{
\EE_{\mx\in B^n}e_m(f(\mx))
}^{2} &= \rc{2^{2n}} \sum_{\mx,\my\in B^n} e_m(f(\my)-f(\mx))\\
&=\rc{2^{2n}}\sum_{\mathbf h\in \{-1,0,1\}^n} \sum_{x_i={\tiny \begin{cases}
0,&h_i=1\\
1,&h_i=-1
\end{cases}}} e_m(\De_{\mathbf h}f(x))\\
&\le \rc{2^{2n}}\sum_{\mathbf h\in \{-1,0,1\}^n}
\ab{ \sum_{x_i={\tiny\begin{cases}
0,&h_i=1\\
1,&h_i=-1
\end{cases}}} e_m(\De_{\mathbf h}f(x))}
\label{eq:weyl-quad}
\end{align}
Here we used the fact that the set of pairs $(\mx,\my)\in B^n\times B^n$ is the same as the set of pairs $(\mx,\mx+\mathbf h)$ where $\mx,\mathbf h$ satisfy the conditions below the sum.

Let $\Supp(\mathbf h)$ be the set of nonzero entries of $\mathbf h$ and $\ve{\mathbf h}_0:=|\Supp(\mathbf h)|$ be the number of nonzero entries of $\mathbf h$. 
Let $N_{\mathbf h}$ denote the number of nonzero (nonconstant) coefficients of the linear function $\De_{\mathbf h} f$ restricted to subcube of $\mx$ such that $x_i=\begin{cases}
0,&h_i=1\\
1,&h_i=-1
\end{cases}$; note that this subcube is of size $2^{n-\ve{\mathbf h}_0}$.
By Lemma~\ref{lem:linear-sum} the exponential sum is at most
\bal
\ab{\EE_{\mx\in B^n}e_m(f(\mx))}^2 
&\le \rc{2^{2n}} 
\sum_{\mathbf h\in \{-1,0,1\}^n}
2^{n-|\Supp(\mathbf h)|} e^{-N_{\mathbf h}/m^2}\\
&=\sum_{\mathbf h\in \{-1,0,1\}^n} \Pj(\mathbf h) e^{-N_{\mathbf h}/m^2}
\end{align*}
where in the last expression we think of $\mathbf h$ as a random variable with $\Pj(h_i=0)=\rc2$, $\Pj(h_i=\pm 1)=\rc4$.

We show that if $A_f$ is $Cm^2\log \prc{\ep}$-rigid mod $p$, then with high probability $N_{\mathbf h}$ is large, so that $e^{-N_{\mathbf h}/m^2}$ is small.

Note that $N_h$ can be computed as follows. We have that $\De_{\mathbf h} f(\mx) = \mx^T A_f\mathbf h$. Since we are considering the restriction of $\De_{\mathbf h} f$ to a subcube where only the $x_i$ with $i\nin \Supp(\mathbf h)$ are free, $N_{\mathbf h}$ is the number of nonzero entries in $((A_f)\mathbf h)_{[n]\bs \Supp(\mathbf h)}$. 
We can consider choosing $\mathbf h$ in 2 stages. First choose a random partition $I_1\sqcup I_2=[n]$; $I_1$ will contain the indices where $\mathbf h$ is 0 and $I_2$ will contain the indices where $\mathbf h$ is $\pm1$. Then choose $\mathbf h_{I_2}\in \{-1,1\}^{I_2}$ uniformly at random. Now 
\[
(A_f\mathbf h)_{[n]\bs \Supp(\mathbf h)} = \ve{(A_f)_{I_1\times I_2} \mathbf h_{I_2}}_0
\]
so the expected value is 
\begin{equation}\label{eq:nonz}
\ab{\EE_{\mx\in B^n}e_m(f(\mx))}^2 
\le
\EE_{I_1\sqcup I_2=[n], \mathbf h_{I_2}\in \{\pm 1\}^{I_2}} 
\exp(-\ve{(A_f)_{I_1\times I_2} \mathbf h_{I_2}}_0/m^2). 
\eeq

We need the following claim.
\begin{lem}\label{lem:wt}
Suppose that $A$ is a matrix over $\Z_m, m=p^\al$ with rank $r$. Suppose that $\mathbf v\in B^l$  is given and $\mathbf w\in B^k$ is chosen uniformly at random. Let $0\le t\le 1$. Then
\[
\Pj(\ve{\mathbf v+A\mathbf w}_0\le tr) \le 
2^{(t+H(t)-1)r+o(1)}
\]
as $r\to \iy$, 
where $H(t)=-t\lg t-(1-t)\lg (1-t)$.
\end{lem}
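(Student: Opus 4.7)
The plan is to reduce to the case of a matrix with exactly $r$ rows, union-bound over the support of $\mathbf v+A\mathbf w$, and control each term by an injectivity argument. First, using $\rank A=r$, select $r$ linearly independent rows of $A$ (per the appendix's characterization of rank); let $\tilde A\in \Z_m^{r\times k}$ be the resulting submatrix and $\tilde{\mathbf v}\in B^r$ the restriction of $\mathbf v$. Since the Hamming weight of a restriction lower bounds the total, $\ve{\mathbf v+A\mathbf w}_0\le tr$ implies $\ve{\tilde{\mathbf v}+\tilde A\mathbf w}_0\le tr$, so it suffices to bound the latter probability.

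The technical core is the following sublemma: for any $M\in \Z_m^{l'\times k}$ of rank $r'$ and any $\mathbf u\in \Z_m^{l'}$,
\[
|\{\mathbf w\in B^k : M\mathbf w=\mathbf u\}|\le 2^{k-r'}.
\]
To prove it, pick $r'$ linearly independent rows of $M$ and, by Proposition~\ref{pr:fullrank}, an $r'\times r'$ square submatrix $M''$ of those rows of full rank. Solutions to $M\mathbf w=\mathbf u$ sit inside solutions to the restricted equation, which decomposes: for each of the $2^{k-r'}$ assignments of the coordinates of $\mathbf w$ outside the columns of $M''$, the remaining equation has the form $M''\mathbf w'=\mathbf u''$ on the $r'$ distinguished coordinates, so it suffices that $M''|_{B^{r'}}$ be injective. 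Writing $M''=PDQ$ in Smith normal form with $D=\diag(p^{a_1},\ldots,p^{a_{r'}})$ and $a_i<\alpha$, we have
\[
\ker M''=Q^{-1}\!\left(\prod_{i=1}^{r'} p^{\alpha-a_i}\Z_{p^\alpha}\right)\subseteq (p\Z_{p^\alpha})^{r'},
\]
because every $\alpha-a_i\ge 1$ and the invertible $Q^{-1}$ preserves the submodule $(p\Z_{p^\alpha})^{r'}$. Since $\{-1,0,1\}\cap p\Z_{p^\alpha}=\{0\}$, the kernel meets $\{-1,0,1\}^{r'}$ only at $\mathbf 0$; any difference $\mathbf w-\mathbf w'$ of elements of $B^{r'}$ killed by $M''$ must therefore vanish, giving injectivity.

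For each subset $T\subseteq[r]$ with $|T|=\lceil(1-t)r\rceil$, the submatrix $\tilde A_{T,:}$ has rank exactly $|T|$ (at most $|T|$ by the row count; at least $|T|$ since any subset of linearly independent rows is independent). By the sublemma, $\Pj((\tilde{\mathbf v}+\tilde A\mathbf w)_T=0)\le 2^{-|T|}$. Since $\ve{\tilde{\mathbf v}+\tilde A\mathbf w}_0\le tr$ forces some such $T$ to be entirely zero, a union bound over the $\binom{r}{|T|}\le 2^{H(t)r+O(1)}$ choices yields
\[
\Pj(\ve{\tilde{\mathbf v}+\tilde A\mathbf w}_0\le tr) \le 2^{H(t)r+O(1)}\cdot 2^{-(1-t)r}=2^{(t+H(t)-1)r+O(1)},
\]
which is the desired bound; a sharper use of Stirling on $\binom{r}{|T|}$ contributes an extra $-\tfrac12\log_2 r$ that absorbs the $O(1)$ into the stated $o(1)$.

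The main obstacle is the injectivity sublemma: over $\Z_{p^\alpha}$ a square full-rank matrix need not be invertible (its determinant can be a zero-divisor), yet it must still be injective on the boolean cube $B^{r'}$. The key insight is that $\ker M''\subseteq (p\Z_{p^\alpha})^{r'}$ via Smith normal form, and this submodule meets $\{-1,0,1\}^{r'}$ only at the origin (because $1$ and $-1\equiv p^\alpha-1$ are units modulo $p$). Once this injectivity is in hand, the reduction step and the union bound are routine given the appendix's linear-algebra facts over $\Z_m$.
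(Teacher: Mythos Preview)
Your proof is correct and follows essentially the same approach as the paper: reduce to $r$ rows, union-bound over the possible zero-coordinate sets, and bound each term via an injectivity argument on the boolean cube using Smith normal form. Your sublemma bounding $|\{\mathbf w\in B^k:M\mathbf w=\mathbf u\}|\le 2^{k-r'}$ plays the same role as the paper's ``$\le 2^d$ solutions in a $d$-dimensional hyperplane'' claim, and your kernel observation $\ker M''\subseteq (p\Z_{p^\alpha})^{r'}$ (hence $\ker M''\cap\{-1,0,1\}^{r'}=\{0\}$) is exactly the mechanism behind the paper's column-echelon counting. One point worth making explicit: the claim that any $|T|$ of the $r$ chosen rows of $\tilde A$ span a rank-$|T|$ subgroup is true but deserves a word of justification over $\Z_{p^\alpha}$ (it follows because the images of the $r$ rows in the $\F_p$-vector space $G/pG$ are forced to be a basis, so any subset remains independent there, which lower-bounds $\dim_{\F_p}(H/pH)$). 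Your write-up is in fact a bit more careful than the paper's about keeping track of $k$ versus $r$ after the row reduction.
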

\begin{proof}
We may reduce to the case where $A$ has $r$ rows by Proposition~\ref{pr:fullrank}, because having at most $d$ nonzero entries in a given set of $r$ entries is a weaker condition than having at most $d$ nonzero entries.

First we claim that 
for any $d$-dimensional hyperplane $H$, the number of solutions to $\mathbf v +A\mathbf w\in H$ is at most $2^d$. 
Suppose the column space of $A$ is isomorphic to $\prod_{i=1}^r (\Z_{\fc m{a_i}})$. There exists an invertible matrix $M$ such that $D:=MA = \diag(a_1,\ldots, a_r)$.
We have are interested in solutions $\mathbf w\in B^n$ to 
\bal
\mathbf v + A \mathbf w &\in H\\
\iff 
A \mathbf w &\in -\mathbf v + H\\
\iff D \mathbf w &\in -M\mathbf v + MH.
\end{align*}
Let $N$ be a $r\times d$ matrix whose columns generate $H$. We would like to count the number of solutions $\mathbf u$ to 
\bal
D\mathbf w &= - M \mathbf v + M(N\mathbf u)\\
\iff \forall i,\quad 0 \text{ or }a_i &= (- M \mathbf v + M(N\mathbf u))_i
\end{align*}
By putting $MN$ in ``column-echelon form," we find that there are at most $2^d$ possibilities for $\mathbf u$. This proves the claim.

Now note the set $\ve{v+Aw}_0\le d$ is defined by $\binom rd$ hyperplanes.  Thus
\[
|\set{w\in B^r}{\ve{v+Aw}_0\le tr}| \le \binom{r}{tr}2^{tr} =2^{(H(t)+t)r+o(1)},
\]
giving the bound.
\end{proof}

Let $q$ be a prime power fully dividing $m$, and 
suppose $A_f\pmod q$ is $r$-rigid for $r=Cm^2\log \prc{\ep}$, $C$ to be chosen. By Lemma~\ref{lem:rigid}, there exist disjoint $J_1,J_2$ such that $H_{J_1\times J_2}$ has rank $\ge \fc{r}4$ and is full rank.

Let $\de$ be a small constant. We have the following with high probability.
\begin{enumerate}
\item
If $J_1'\subeq J_1$ and $J_2'\subeq J_2$ are random subsets, where each element is included individually with probability $\rc2$, with high probability $|J_1'|\ge (1-\de)\fc r8$ and \[\rank((A_f)_{J_1'\times J_2})\ge (1-\de)\fc{r}8.\]
The probability of failure is $\le \exp(-\fc r8 \de^2)=\exp(-\Om(r\de^2))$.
\item
If item 1 holds, choose any $(1-\de)\fc r8$ columns of $H_{J_1'\times J_2}$ that generate a rank $(1-\de)\fc{r}{8}$ subgroup. With high probability, $J_2'$ will intersect at least $(1-\de)^2\fc r{16}$ of them, and 
\[
\rank((A_f)_{J_1'\times J_2'})\ge (1-\de)^2\fc{r}{16}
\]
The probability of failure is again $\le \exp(-\Om(r\de^2))$.
\item
For $I_1\sqcup I_2=I$ a random partition, the intersections $I_1\cap J_1$, $I_2\cap J_2$ are random, so they can be modeled by $J_1',J_2'$ and we get
\[
\rank((A_f)_{I_1'\times I_2'})\ge (1-\de)^2\fc{r}{16}
\]
By Lemma~\ref{lem:wt}, $\Pj\pa{\ve{(Af)_{I_1'\times I_2'}}_0\le (1-\de)^2\fc{r}{64}}\le 2^{(\rc4+H\prc 4-1)r + o(1)}$, i.e., with high probability
\[
\ve{(A_f)_{I_1'\times I_2'}h}_0 > (1-\de)^2 \fc{r}{64}.
\]
The probability of failure is $\exp(-\Om(r))$.
\end{enumerate}
Thus separating out the terms in the sum which have 
$\ve{(A_f)_{I_1'\times I_2'}\mathbf h}_0 \ge \fc{r}{100}$ in~\eqref{eq:nonz}, we get
\begin{equation}\label{eq:final}
\ab{\EE_{\mx\in B^n}e_m(f(\mx))}^2 \le 
\EE_{I_1\sqcup I_2=[n], \mathbf h_{I_2}\in \{\pm 1\}^{I_2}} 
\exp(-\ve{(A_f)_{I_1\times I_2} \mathbf h_{I_2}}_0/m^2)
\le e^{-\Om(r)} + e^{-\fc{r/100}{m^2}}.
\eeq
In our setting $r=\Om(m^2\log\prc{\ep})$, so~\eqref{eq:final} equals $\ep^2$. This proves the theorem.
\end{proof}

\section{Proof of main theorem}
\label{sec:proof}
\begin{proof}[Proof of Theorem~\ref{thm:main}]
Note that if $m=m_1m_2$ (not necessarily relatively prime) and the proportion of zeros $\rc{2^n}|\set{\mx \in B^n}{f(x)\equiv 0\pmod{m_2}}|$ is already biased, we expect~\eqref{eq:ct} to be biased as well. To take this into account, we separate out the terms where $j\equiv 0 \pmod{m_1}$ and use $e_m(m_1k)=e_{m_2}(k)$. Then~\eqref{eq:ct} becomes
\begin{align}
\eqref{eq:ct}
&=\rc m + \rc m \sum_{j\md m \nequiv 0\md{m_1}}  \EE_{\mx\in B^n} e_m(jf(\mx)) + \rc m \sum_{j\md m\equiv 0\md{m_1}}e_{m}(jf(\mx))\\
&=\rc m + \rc m \sum_{j\md m \nequiv 0\md{m_1}}  \EE_{\mx\in B^n} e_m(jf(\mx)) + \rc m \sum_{k\nequiv 0\md{m_2}}e_{m_2}(kf(\mx))\\
&=\pa{\rc m \sum_{j\md m \nequiv 0\md{m_1}}  \EE_{\mx\in B^n} e_m(jf(\mx))} + \rc{m_1}\pa{\rc{m_2}+\rc{m_2}\sum_{k\nequiv 0\md{m_2}}e_{m_2}(kf(\mx))}\\
&=\pa{\rc m \sum_{j\md m \nequiv 0\md{m_1}}  \EE_{\mx\in B^n} e_m(jf(\mx))} + \rc{m_12^n} |\set{\mx\in B^n}{f(\mx)\equiv 0 \md{m_2}}|.
\label{eq:split}
\end{align}

Let the prime factorization of $m$ be $p_1^{a_1}\cdots p_d^{a_d}$. For $1\le i\le d, 1\le b\le a_i$, let the boolean rank of $f$ modulo $p_i^{b}$ be $r_{i,b}$. (Note that $r_{i,1}\le r_{i,2}\le \cdots$.)
Let $r_1\ge \cdots \ge r_{d'}$ be the numbers $r_{i,b}$ in decreasing order, and let $p_1',\ldots, p_{d'}'$ be the associated primes (so $p_i$ appears $a_i$ times in this sequence).
Consider 3 cases. Let $C$ be the constant in Theorem~\ref{thm:rigid}.
\begin{enumerate}
\item
$r_{d'}>Cm^2\log m$. Then $r_i>Cm^2\log m$ for each $i$.
Note that for $0<j<m$ we have
\[
e_m(jf(\mx)) = e_{\fc{m}{\gcd(m,j)}}(j'f)
\]
where $j'$ is invertible. The boolean rank of $j'f$ and $f$ are equal modulo any prime power dividing $\fc{m}{\gcd(m,j)}$. 
By Theorem~\ref{thm:rigid} on $\fc{m}{\gcd(m,j)}$, we have
\[
\ab{\EE_{\mx\in B^n} e_m(jf(\mx))} <\rc{m}.
\]
Thus by~\eqref{eq:ct}, the proportion of zeros is $\ge\rc{m^2}>\rc{2^n}$, and $f$ does not represent OR${}_n$.
\item
There exists $i$ such that $r_i \ge Cm^3 d (\log m)(\log n) r_{i+1}$. Then by~\eqref{eq:split} on $m_1=p_1'\cdots p_i'$ and $m_2=p_{i+1}'\cdots p_{d'}'$, using Theorem~\ref{thm:2soln} to lower-bound the counts,
\begin{align*}
&\rc{2^n}|\set{\mx\in B^n}{f(\mx)=0}|
\\
&\ge\pa{\rc m \sum_{j\md m \nequiv 0\md{m_1}}  \EE_{\mx\in B^n} e_m(jf(x))} + 2^{-m_2(r_{i+1}+\cdots +r_d) \log m_2\log n-\log m_1}.
\end{align*}
In order for this to be $>\rc{2^n}$ (so that$f$ has more than 1 zero), it suffices to have for each $j\pmod m\nequiv 0\pmod{m_1}$,
\begin{equation}\label{eq:ect}
 \EE_{\mx\in B^n} e_m(jf(x)) < 2^{-m_2(r_{i+1}+\cdots +r_d) \log m\log n}.
\eeq

Because $m_1\nmid j$, for some $p$ we have $v_{p}(m_1)>v_{p}(j)$, and $v_{p}\pf{m}{\gcd(m,j)}> v_{p}(m_2)$. The number of $t>i$ such that $p_t'=p$ is $v_{p}(m_2)$, so the $v_{p}\pf{m}{\gcd(m,j)}$th appearance of $p$, counting from $d$ down to 1, is $p_{s}'$ for some $s<i$. Then the rigidity of $\fc{j}{\gcd(j,m)}f$ modulo $p^{v_{p}\pf{m}{\gcd(m,j)}}$ is at least $r_s\ge r_i$. 

By Theorem~\ref{thm:rigid} on $\fc{j}{\gcd(j,m)}f$ modulo $p^{v_{p}\pf{m}{\gcd(m,j)}}$,~\eqref{eq:ect} holds when
\[
r_i \ge m^2\log (2^{m_2(r_{i+1}+\cdots +r_d) \log m\log n}).
\]
It suffices to have
\[
r_i\ge m^3 d r_{i+1}(\log m)(\log n) ,
\]
which is exactly the assumption for this case.
\item
Neither of the first two cases hold. Then the ratio between consecutive $r_i$ is at most $Cm^3d(\log n)(\log m)$, so 
\[
\sum_{i=1}^d r_i \le (Cm^3d(\log n)(\log m))^d
\]
If $n>m^{4d}$, then this quantity is $< \fc{n}{m\log m}$. Thus by Theorem~\ref{thm:2soln}, $f$ has at least 2 zeros, and $f$ does not represent OR.
\end{enumerate}
\end{proof}

\section{Thoughts on higher degree}
\label{sec:thoughts}

The key reason that this argument works for degree 2 polynomials is that two notions of rank coincide---the boolean rank of $f$ and the rigidity of the associated matrix. When the boolean rank is low, we find that $f(\mx)=0$ has many solutions by solving a series of linear equations; when rigidity is high, the exponential sum is small, and we have close to the expected number of solutions. For degree $\ge 3$ we lose this natural criterion for the exponential sum to be small.

The notion of rank can be naturally generalized. The 1-rank is the notion of rank we used.
\begin{df*}[{\cite[Def. 1.5]{GT}}]
Let $d\ge 0$ and let $f:\Z_m^n\to \Z_m$ be a function. The \vocab{degree $d$ rank} $\rank_d(f)$ is the least integer $k\ge 0$ for which there exist polynomials $Q_1,\ldots, Q_k$ of degree $d$ and a function $F$ such that 
\[
f = F(Q_1,\ldots, Q_k).
\]
\end{df*}

We seek an analogue of Theorem~\ref{thm:rigid} for higher degree. A first attempt is to try to use the Bognadov-Viola Lemma, which says that lack of equidistribution implies low rank.
\begin{lem*}[{\cite[Lem. 24]{BV}}]
Let $\de,\si\in (0,1]$. 
If $P$ is a polynomial of degree $d$ over a finite field $\F$ such that \[|\EE_{\mx\in \F^n}e_{\F}(P(x))|\ge \de,\] there exists a function $\wt P$ agreeing with $P$ on $1-\si$ of inputs, such that 
\[
\rank_{d-1}(\wt P)\le \poly\pa{|\F|,\rc{\de},\rc{\si}}.
\]
\end{lem*}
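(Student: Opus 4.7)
The plan is to prove this by induction on the degree $d$, using Weyl differencing as the main engine for degree reduction. The base case $d=1$ is immediate: over a finite field $\F$, a linear polynomial $L$ satisfies $\E_\mx e_\F(L(\mx)) = 0$ unless every variable's coefficient vanishes, in which case $L$ is constant. Thus bias $\ge \de > 0$ forces $L$ to be a degree-$0$ polynomial, so we may take $\wt P = P$ with no corruption and $\rank_0(\wt P) \le 1$.

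For the inductive step, Cauchy--Schwarz gives $|\E_\mx e_\F(P(\mx))|^2 \le \E_{\mathbf h} |\E_\mx e_\F(\De_{\mathbf h} P(\mx))|$, and a Markov argument produces a set $H \subeq \F^n$ of density at least $\de^2/2$ consisting of ``good'' directions $\mathbf h$ for which $\De_{\mathbf h} P$ has bias at least $\de^2/2$. Since $\De_{\mathbf h} P$ has degree $d-1$, the inductive hypothesis yields $\wt Q_{\mathbf h}$ agreeing with $\De_{\mathbf h} P$ on a $(1-\si')$-fraction of inputs, with $\rank_{d-2}(\wt Q_{\mathbf h}) \le K$ for some $K = \poly(|\F|, 1/\de, 1/\si')$.

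The crucial step is to stitch these pointwise low-rank representations of derivatives into a single low-rank representation of $P$. The idea is to exploit the approximate additive structure of $H$: the identity $\De_{\mathbf h+\mathbf h'}P(\mx) = \De_{\mathbf h} P(\mx) + \De_{\mathbf h'} P(\mx+\mathbf h)$ shows that the ``low-$(d-2)$-rank'' property of derivatives is subadditive up to a factor of $2$, so $H$ has small doubling. A Balog--Szemer\'edi--Gowers plus Pl\"unnecke argument then produces a large subspace $V \subeq \F^n$ contained in $kH - kH$ for bounded $k$. Fixing a basis $\mathbf h_1,\ldots,\mathbf h_r$ of $V$ and a base point $\mx_0$, one telescopes $P(\mx_0 + \sum_i c_i \mathbf h_i) = P(\mx_0) + \sum_i \De_{\mathbf h_i} P(\cdot)$ along a coordinate path to express $P$ on $\mx_0+V$ as a function of at most $rK$ degree-$(d-1)$ polynomials; extending via coset representatives of $\F^n/V$ and tracking the error set yields $\wt P$ with $\rank_{d-1}(\wt P) \le \poly(|\F|, 1/\de, 1/\si)$ agreeing with $P$ on a $(1-\si)$-fraction of inputs.

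The main obstacle is the combinatorial assembly of local derivative information into a global structure. Three intertwined difficulties arise: (i) rank subadditivity is only approximate, so extracting additive structure from $H$ must absorb rank blowup at each application; (ii) the $\si'$-errors in each $\wt Q_{\mathbf h}$ compound through the telescoping path, requiring careful union bounds and a path chosen to avoid bad positions; (iii) one must ensure the polynomials in the representation of $P$ form a \emph{fixed} small collection rather than an $\mathbf h$-indexed family, which forces the Freiman-type extraction of $V$ to be done uniformly over many good $\mathbf h$ simultaneously. A conceptually cleaner but quantitatively weaker route invokes the Green--Tao--Ziegler inverse theorem for Gowers norms: $|\E e_\F(P)|\ge \de$ implies $\|e_\F(P)\|_{U^d}$ is large, hence $P$ correlates with a degree-$(d-1)$ polynomial, which can be subtracted and the procedure iterated to build up the low-rank representation one piece at a time.
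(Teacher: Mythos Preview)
This lemma is not proved in the paper; it is quoted from \cite{BV} in Section~\ref{sec:thoughts} solely to explain why it is insufficient for the paper's goals. There is no proof in the paper to compare your proposal against.

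On its own merits, your sketch has two real problems. The induction is superfluous: the target is a bound on $\rank_{d-1}(\wt P)$, and the derivatives $\De_{\mathbf h}P$ are \emph{already} of degree $d-1$, so writing $\wt P$ as a function of $k$ fixed derivatives directly gives $\rank_{d-1}(\wt P)\le k$. Recursing to obtain $\rank_{d-2}$ control on approximations $\wt Q_{\mathbf h}$ of the derivatives contributes nothing toward the stated conclusion. More seriously, the telescoping step does not produce a rank representation. Telescoping $P\bigl(\mx_0+\sum_i c_i\mathbf h_i\bigr)$ expresses $P(\mx)$ through values of $\De_{\mathbf h_i}P$ at \emph{intermediate path points}, not at $\mx$ itself; these are not of the form $Q_j(\mx)$ for a fixed collection of polynomials $Q_j$, which is what the definition of $\rank_{d-1}$ requires. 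Even setting that aside, covering $\F^n$ by cosets of $V$ costs $\dim V$ derivatives plus $n-\dim V$ linear forms to name the coset, for a trivial total of $n$. The BSG/Freiman outline you give is really the skeleton of the Green--Tao exact-rank theorem you mention at the end, and that argument proceeds quite differently (via iterated regularization of polynomial factors) and does not yield polynomial bounds.

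The actual Bogdanov--Viola argument is a one-shot sampling trick with no induction or additive combinatorics. From the identity
\[
\E_{\mathbf h}\,e_\F(\De_{\mathbf h}P(\mx)) \;=\; e_\F(-P(\mx))\cdot \E_{\my}\,e_\F(P(\my)),
\]
the phase of the left side determines $P(\mx)$ (the modulus being $\ge\de$ by hypothesis). Sample $k=\poly(|\F|,\rc{\de},\rc{\si})$ directions $\mathbf h_1,\ldots,\mathbf h_k$, estimate the left side by $\tfrac1k\sum_i e_\F(\De_{\mathbf h_i}P(\mx))$, and round the phase to recover $P(\mx)$; Chernoff plus averaging over the choice of the $\mathbf h_i$ fixes a good tuple for which this succeeds on a $(1-\si)$-fraction of $\mx$. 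The resulting $\wt P$ is, by construction, a function of the $k$ degree-$(d-1)$ polynomials $\De_{\mathbf h_1}P,\ldots,\De_{\mathbf h_k}P$.
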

$\wt P$ is a function of the differences of $P$ in certain directions, which have degree $d-1$. 
For us, this lemma is insufficient for two reasons: 
\begin{enumerate}
\item
$P$ only partially agrees with $\wt P$ (it could be that for all $\wt P(\mx)=0$, we have $P(\mx)\ne0$).
\item
We do not expect $\wt P$ to be equidistributed---far from it: enough differences of $P$ are ``sampled" in order for them to ``concentrate" enough to predict the value of $P$.
\end{enumerate}
Green and Tao prove an exact, but ineffective, form of this result. This was later made algorithmic in~\cite{BHT}.
\begin{thm}[{\cite[Thm. 1.7]{GT}}]
Suppose $0\le d<|\F|$. Suppose $P$ is of degree $d$ and $|\E_{\mx\in \F^n}e_{\F}(P(\mx))|\ge \de$. Then $\rank_{d-1}(P)=O_{\F,\de,d}(1)$. 
\end{thm}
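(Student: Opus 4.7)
The plan is to induct on the degree $d$. The base case $d = 1$ is immediate over $\F^n$: a linear polynomial $P(\mx) = c + \sum_i a_i x_i$ has exponential sum
$e_\F(c)\prod_i \EE_{x_i \in \F} e_\F(a_i x_i)$,
and any coefficient $a_i \ne 0$ kills the $i$-th factor since $\sum_{x \in \F} e_\F(a x) = 0$. Hence nonzero bias forces every $a_i$ to vanish, giving $\rank_0(P) = 0$, which is $O_{\F,\de,1}(1)$.

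For the inductive step, assume the theorem for all degrees below $d$, and let $P$ have degree exactly $d$ with $|\EE_\mx e_\F(P(\mx))| \ge \de$. One round of Weyl differencing (Cauchy--Schwarz) gives
\[
\de^2 \;\le\; \left|\EE_{\mx \in \F^n} e_\F(P(\mx))\right|^2 \;=\; \EE_{\mathbf h \in \F^n} \EE_{\mx \in \F^n} e_\F\bigl(\partial_{\mathbf h} P(\mx)\bigr),
\]
where $\partial_{\mathbf h} P(\mx) := P(\mx+\mathbf h) - P(\mx)$ has degree at most $d-1$ in $\mx$. By Markov, for at least a $\de^2/2$ fraction of directions $\mathbf h$, $\partial_{\mathbf h} P$ has bias $\ge \de^2/2$, and the inductive hypothesis gives $\rank_{d-2}(\partial_{\mathbf h} P) \le K$ for some $K = K(|\F|,\de,d)$.

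The main obstacle is to upgrade this \emph{pointwise} rank bound on many derivatives of $P$ to a \emph{global} bound $\rank_{d-1}(P) = O_{\F,\de,d}(1)$. For this one invokes the Green--Tao polynomial regularity lemma to factor $P = F(Q_1,\ldots,Q_r)$ through a $d$-regular (highly equidistributed) collection of polynomials $Q_i$ of degree $\le d$. Equidistribution together with the bias $\ge \de$ lets one Fourier-expand $F$ and deduce that $P$ must correlate with some $\F$-linear combination of those $Q_i$ of top degree $d$; the hypothesis $d < |\F|$ guarantees that we stay in the classical-polynomial regime (ruling out the nonclassical obstructions discussed earlier in the paper), so one can iteratively strip off degree-$d$ contributions and express $P$ as a function of boundedly many polynomials of degree $\le d-1$. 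The proof is inherently \emph{ineffective} because the polynomial regularity lemma provides no explicit control on $r$; this is precisely the quantitative gap that \cite{BHT} later closes algorithmically.
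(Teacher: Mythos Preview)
The paper does not prove this theorem; it is simply quoted from Green and Tao~\cite{GT} in the speculative Section~\ref{sec:thoughts} as background for possible higher-degree extensions. There is therefore no ``paper's own proof'' to compare your attempt against.

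As for your sketch itself: the base case and the Weyl-differencing/Markov step are fine and indeed form the opening move of the Green--Tao argument. The gap is in the last paragraph. Writing $P = F(Q_1,\ldots,Q_r)$ via the polynomial regularity lemma does not by itself bound $\rank_{d-1}(P)$, because the regularity lemma only bounds $r$ in terms of an \emph{arbitrary} growth function, not in terms of $\de$ alone; and the factor it produces may well contain degree-$d$ polynomials $Q_i$. Your sentence ``one can iteratively strip off degree-$d$ contributions'' is exactly the hard part of~\cite{GT}: one must show that if the factor is sufficiently regular and $P$ has large bias, then in fact no degree-$d$ atom is needed, which requires a careful counting argument using the equidistribution of regular factors (itself proved jointly with this theorem by a simultaneous induction on $d$). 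Invoking ``the Green--Tao regularity lemma'' as a black box here is close to circular, since in~\cite{GT} the regularity lemma, the equidistribution of high-rank factors, and the bias-implies-low-rank statement are established together in a single inductive loop. Your outline captures the right shape but skips the step where all the content lives.
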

If this result carries over to composite moduli, one could hope to make the following argument, illustrated for $d=3$. If the 2-rank is high, then the exponential sum is small, and we are done. If the 2-rank is low, then we can write $f$ in terms of few quadratics, and perhaps we can then use the $d=2$ case on those quadratics $Q_1,\ldots, Q_r$, proving that they achieve they are 0 simultaneously for enough values of $\mx$. However, if this works at all, it seems that the bounds would be enormous.

\printbibliography
\appendix
\section{Linear algebra over $\Z_m$}
\label{sec:app}

We gather some facts about quadratic forms and matrices over $\Z_m$, where $m$ is composite. For background, see~\cite{MH}.
\begin{df}
For an abelian group $G$, define the \vocab{rank} of $G$ to be the minimal $r$ such that there exist $m_1,\ldots, m_r$, with
\[
G\cong (\Z_{m_1})\times \cdots \times (\Z_{m_r}).
\]
Note if $m$ is a prime power, then this representation is unique up to ordering.

Define the \vocab{rank} of a matrix over $\Z_m$ to be the rank of its image (column space). 

Define the \vocab{rank} of a quadratic polynomial $f$ over $\Z_m$ to be the minimal $r$ such that there exists a function $F$ and vectors $\mv_1,\ldots, \mv_r$ such that $f=F(\mv_1^T\mx,\ldots, \mv_r^T\mx)$.
\end{df}
For example, $\smatt 4022$ has rank 2 (``full rank") over $\Z_8$ because the columns generate the subgroup $\Z_4\times \Z_2$; however, it does not generate the whole group.

We note that many facts about rank carry over to abelian groups. Let $A$ be a matrix over $\Z_m$.
\begin{pr}
The subgroup generated by the rows of $A$ is isomorphic to the subgroup generated by the columns of $A$. Thus, the row and column rank of $A$ are equal.
\end{pr}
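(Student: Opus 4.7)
The plan is a two-step reduction: first to prime-power moduli via the Chinese Remainder Theorem, then to diagonal matrices via Smith normal form.

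\textbf{Step 1 (CRT reduction).} Writing $m = \prod_p p^{v_p(m)}$, the isomorphism $\Z_m \cong \prod_p \Z_{p^{v_p(m)}}$ extends entrywise to matrices. The row subgroup of $A$ decomposes as the direct product of the row subgroups of its prime-power components, and likewise for columns. So it suffices to prove both assertions when $m = p^{\al}$ is a prime power.

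\textbf{Step 2 (Smith normal form).} For $m=p^{\al}$, I would produce a factorization $A = UDV$ with $U \in \GL_{m}(\Z_{p^{\al}})$, $V \in \GL_n(\Z_{p^\al})$, and $D$ a rectangular ``diagonal'' matrix whose nonzero entries are $p^{e_1}, \ldots, p^{e_k}$ with $0 \le e_i < \al$. The cleanest route is to lift $A$ to an integer matrix $\wt A$, apply classical Smith normal form $\wt A = \wt U \wt D \wt V$ over the PID $\Z$, and reduce modulo $p^\al$; since $\det \wt U = \pm 1$ is a unit modulo $p^\al$, the images of $\wt U$ and $\wt V$ remain invertible, and each diagonal entry of $\wt D$ either keeps the form $p^{e_i}$ with $e_i < \al$ or collapses to $0$.

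\textbf{Step 3 (Diagonal case).} Left multiplication by $V^T$ is a group automorphism of $\Z_m^n$ carrying the row space of $D$ onto the row space of $A$; similarly for right multiplication by $U^T$ on column spaces. Hence both row spaces, and both column spaces, agree up to isomorphism with those of $D$. For the diagonal $D$, the row space and the column space are both generated by $\{p^{e_i} \mathbf{e}_i\}_{i=1}^k$, and both are visibly isomorphic to
\[
\bigopl_{i=1}^k p^{e_i}\Z_{p^\al} \;\cong\; \bigopl_{i=1}^k \Z_{p^{\al - e_i}}.
\]
Combined with Step 1 this proves the first assertion, and the equality of ranks follows immediately because rank (in the sense defined in the appendix) is an isomorphism invariant of finite abelian groups.

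\textbf{Main obstacle.} The one nontrivial input is Smith normal form, and the main reason to prefer the lift-to-$\Z$ approach in Step 2 is that it bypasses having to redevelop elementary divisor theory over the local ring $\Z_{p^\al}$, which has zero divisors; the only thing one needs to verify is that $\GL(\Z) \to \GL(\Z_{p^\al})$ is well-defined, which is immediate from $\det = \pm 1$. Once this is in hand, Step 3 is a short verification and Step 1 is routine.
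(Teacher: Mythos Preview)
Your proof is correct and follows essentially the same route as the paper: reduce to a diagonal matrix via Smith normal form and observe the claim is obvious there. The paper's version is a two-line sketch that invokes Smith normal form over $\Z_m$ directly; your CRT reduction and lift-to-$\Z$ justification are extra but harmless elaborations of that same step.
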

\begin{proof}
Using elementary (invertible) row and column operations, $A$ can be put into Smith normal form, i.e., diagonalized. For diagonal matrices, the assertion is clear.
\end{proof}

\begin{pr}\label{pr:fullrank}
Let $m=p^\al$ be a prime power.

Suppose $A$ is a matrix over $\Z_m$ with rank $r$. There exists a subset of $r$ rows of $A$ that span the row space of $A$.

Hence, $A$ has a $r\times r$ submatrix of rank $r$ (a ``full rank" submatrix).
\end{pr}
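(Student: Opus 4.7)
My plan is to reduce the assertion to Nakayama's lemma applied to the local ring $\Z_{p^\alpha}$.

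First I would verify that, for a finitely generated $\Z_{p^\alpha}$-module $M$, the paper's notion of rank (the minimum number of cyclic factors in a decomposition $M \cong \Z_{p^{\beta_1}} \times \cdots \times \Z_{p^{\beta_r}}$, which exists by the structure theorem) coincides with the minimum number of generators of $M$, and further with $\dim_{\F_p}(M/pM)$. The first equality is immediate from the cyclic decomposition, and the second is Nakayama's lemma (or its counting version): since $\Z_{p^\alpha}$ is local with residue field $\F_p$, any set of elements whose images span $M/pM$ over $\F_p$ already generates $M$.

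Now let $R \subseteq \Z_{p^\alpha}^n$ be the row space of $A$, so $R$ has rank $r$ and therefore $\dim_{\F_p}(R/pR) = r$. The rows $r_1, \ldots, r_k$ of $A$ generate $R$, so their mod-$p$ images $\bar r_1, \ldots, \bar r_k$ span $R/pR$ as an $\F_p$-vector space. Choose indices $i_1, \ldots, i_r$ such that $\bar r_{i_1}, \ldots, \bar r_{i_r}$ form an $\F_p$-basis of $R/pR$. By the Nakayama step above, the lifts $r_{i_1}, \ldots, r_{i_r}$ generate $R$ as a $\Z_{p^\alpha}$-module. This gives the first assertion.

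For the second assertion, let $A'$ be the $r \times n$ submatrix formed by these $r$ rows. Its row space equals $R$, so $A'$ has row rank $r$, and by the preceding proposition (row rank equals column rank) its column space also has rank $r$. Applying the same Nakayama argument to the columns of $A'$ selects $r$ columns whose reductions span the column space of $A'$ modulo $p$; those $r$ columns then generate the column space, producing an $r \times r$ submatrix of rank $r$. The only delicate point is checking that the paper's rank really does equal $\dim_{\F_p}(M/pM)$ in the local setting $\Z_{p^\alpha}$, which is why the hypothesis that $m$ is a prime power is essential; everything else is standard once this identification is in place.
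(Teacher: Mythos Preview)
Your argument is correct. The identification $\rank(M)=\dim_{\F_p}(M/pM)$ for finitely generated $\Z_{p^\alpha}$-modules is exactly right, and the lifting step is a clean application of Nakayama's lemma over the local ring $\Z_{p^\alpha}$. The deduction of the $r\times r$ full-rank submatrix from the row statement, via the row-rank-equals-column-rank proposition, matches the paper's own deduction.

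The paper proves the first assertion differently: it runs an explicit induction on $r$, at each step selecting a row of maximal $p$-power order, noting that this row splits off as a direct summand of the row space, and then projecting the remaining rows to the complementary factor of rank $r-1$. In effect the paper re-derives by hand the consequence of Nakayama that you invoke. Your route is shorter and more conceptual, and makes transparent why the prime-power hypothesis is needed (locality of $\Z_{p^\alpha}$); the paper's route is more self-contained for a reader who does not have Nakayama's lemma at hand, and incidentally exhibits the chosen rows in a specific order (largest invariant factor first).
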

\begin{proof}
We use the fact that if $G$ is a finite abelian $p$-group, then the representation $G=\prod_{i=1}^r (\Z_{p^{\al_i}})$ is unique and the number of factors equals the rank.

Induct on $r$. The claim is true for $r=1$. 
Let $p^a$ be the maximal order of an element in the row space (the order of any element in $\Z_m^k$ is a power of $p$). Because the order of an abelian group is the gcd of the orders of elements in a generating set, there is a row $\mv$ with order $p^a$. Choose this row.

Because $a$ was chosen maximal, the row space is isomorphic to $\an{\mv}\times R'$ for some $R'$ of rank $r-1$. Now consider the projection of the remaining rows to $R'$, and apply the induction hypothesis.

For the last claim, apply the fact to the rows of $A$ and then the columns of the resulting matrix.
\end{proof}

\begin{pr}\label{pr:rr}
Suppose $f$ is a quadratic form over $\Z_{p^\al}$. Let $A_f$ be the associated matrix. If $p=2$, assume that all coefficients of $f$ are divisible by 2, so that $A_f$ is well defined.

Then $\rank(f)=\rank(A_f)$.
\end{pr}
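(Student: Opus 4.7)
I would prove the two inequalities $\rank(A_f) \le \rank(f)$ and $\rank(f) \le \rank(A_f)$ separately, using the polarization identity $f(\mx+\my) - f(\mx) - f(\my) = 2\mx^T A_f \my$ together with the self-duality of $V = \Z_{p^\al}^n$ under the standard dot product. This pairing assigns to every submodule $L \subseteq V$ an orthogonal complement $L^\perp$ satisfying $L^{\perp\perp} = L$, and via the isomorphism $V/L \cong (L^\perp)^*$ of finite abelian $p$-groups it shows that $V/L$ and $L^\perp$ always have the same rank. I will also use the easy fact that rank is monotone under surjections of finite abelian $p$-groups (a quotient needs at most as many cyclic factors as the original).

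For the direction $\rank(f) \le \rank(A_f)$, I would set $s = \rank(A_f)$ and $K = \ker A_f$. The polarization identity immediately gives $f(\mx+\my) = f(\mx)$ for every $\my \in K$, so $f$ descends to a function on $V/K \cong \im A_f$, a module of rank $s$. Then $K^\perp$ also has rank $s$; choosing $s$ generators $\mv_1,\ldots,\mv_s$ of $K^\perp$ and using $K^{\perp\perp} = K$, the map $\mx \mapsto (\mv_i^T\mx)_{i=1}^s$ has kernel exactly $K$ and therefore injects $V/K$ into $\Z_{p^\al}^s$. Defining $F$ on the image and extending arbitrarily yields $f = F(\mv_1^T\mx,\ldots,\mv_s^T\mx)$, so $\rank(f) \le s$.

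For the direction $\rank(A_f) \le \rank(f)$, write $f = F(\mv_1^T\mx,\ldots,\mv_r^T\mx)$ with $r = \rank(f)$, and let $K' = \bigcap_i \ker \mv_i^T$ and $W = \spn(\mv_1,\ldots,\mv_r)$, so $\rank(W) \le r$. For $\mx \in K'$, we have $\mv_i^T(\mx+\my) = \mv_i^T\my$, hence $f(\mx+\my) = f(\my)$, and then the polarization identity (combined with $\my=0$ to get $f(\mx)=0$) forces $2A_f\mx = 0$ for every $\mx \in K'$. When $p$ is odd, $2$ is invertible, so $K' \subseteq \ker A_f$, giving a surjection $V/K' \twoheadrightarrow V/\ker A_f \cong \im A_f$. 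By rank monotonicity and the duality $\rank(V/K') = \rank((K')^\perp) = \rank(W) \le r$, this yields $\rank(A_f) \le r$.

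The main obstacle is the $p = 2$ case of this second direction, since $2A_f\mx = 0$ no longer forces $A_f\mx = 0$. The evenness hypothesis on $f$'s coefficients (which is precisely what makes $A_f$ well defined) should rescue the argument. The cleanest workaround I see is to avoid dividing by $2$ altogether: one checks that $F$ restricted to $\im L$ (where $L\mx = (\mv_i^T\mx)_i$) is itself a quadratic form, because $F(\mathbf a+\mathbf b)-F(\mathbf a)-F(\mathbf b)$ equals $2\mx_a^T A_f \mx_b$ for any preimages $\mx_a,\mx_b$ and is therefore bilinear on $\im L \times \im L$, using $\ker L\subseteq \ker(2A_f)$. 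Expressing this quadratic form as $\mathbf y^T B\mathbf y$ for some symmetric $r\times r$ matrix $B$ (after choosing generators of $\im L$) gives $f(\mx) = \mx^T(L^T B L)\mx$, so taking $A_f = L^T B L$ as the symmetric representative yields $\rank(A_f) \le \rank(L)\le r$. The subtle point requiring care is that $A_f$ is not entirely unique when $p=2$, so one must verify that the construction produces a valid $A_f$ in the sense of the paper's convention.
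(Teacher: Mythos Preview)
Your argument for $\rank(f)\le\rank(A_f)$ is correct and is a bit different in packaging from the paper's. The paper proceeds by descent: whenever $\rank(A_f)<n$ it locates a vector $v_1$ of full order $p^{\al}$ in $\ker A_f$ (which exists by Smith normal form), observes that $v_1^TA_f=A_fv_1=0$ kills one coordinate of $f$, and repeats. Your one–shot version via duality (show $f$ is constant on cosets of $K=\ker A_f$, then produce $\rank(K^{\perp})=\rank(V/K)=\rank(A_f)$ linear forms whose joint kernel is exactly $K$) achieves the same conclusion more directly and avoids the induction. Both are valid; note that this is the only direction the paper actually uses downstream (in Proposition~\ref{pr:rr2} and Theorem~\ref{thm:rigid}).

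For the reverse inequality $\rank(A_f)\le\rank(f)$, your odd--$p$ argument is clean and correct. Your unease in the $p=2$ case is well founded, and the workaround you sketch does not close the gap. The step ``express $F$ as $\my^TB\my$ on $\Z_m^r$'' can fail because a quadratic map on a non-free submodule of $\Z_m^r$ need not extend to a quadratic form on the ambient module, and more fundamentally you cannot guarantee that the symmetric matrix you build coincides with \emph{the} $A_f$ attached to the given polynomial. In fact, with the paper's functional definition of $\rank(f)$ and the polynomial's $A_f$, this direction is false when $p=2$: over $\Z_4$ take $f(x_1,x_2)=2x_1^2+2x_2^2$. All coefficients are even, $A_f=\smatt 2002$ has image $2\Z_4\times 2\Z_4\cong\Z_2\times\Z_2$ so $\rank(A_f)=2$, yet as functions on $\Z_4$ one has $2x^2=2x$, whence $f(\mx)=2(x_1+x_2)$ and $\rank(f)=1$. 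So there is no fix to be found here; the obstruction you noticed about non-uniqueness of $A_f$ when $p=2$ is exactly what breaks the equality. The paper's own proof of this direction (which rests on the unproved identification of $\rank(f)$ with the minimal size of a symmetric $D$ in a factorization $A_f=P^TDP$) has the same lacuna.
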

Essentially, the difference between the two is that $\rank(f)$ is the minimal size of a matrix $D$ such that there exist $S$ with $A_f=P^TDP$, while $\rank(A_f)$ is the minimal size of the matrix $D'$ such that there exist $S,T$ with $A_f=P^TD'Q$.
\begin{proof}
From the comment, it is clear that $\rank(f)\ge \rank(A_f)$. Let $D,D'$ be the smallest matrices as above and let $n$ be the size of $D$.  Suppose \bwoc{} that $\rank(D')<n$. Then the left nullspace of $D'$ must contain a subgroup isomorphic to $\Z_{p^\al}$. Take a generator $v_1$ for this subgroup.  Complete $\{v_1\}$ to a generating set $\{v_1,\ldots, v_n\}$ for $\Z_m^n$. From $v_1^TD=0$ and $Dv_1=0$ ($D$ is symmetric) we see that $f$ depends only on $\mv_2^T\mx,\ldots, \mv_n^T\mx$, contradiction.
\end{proof}

%

\end{document}